\documentclass[11pt,letterpaper]{article}
\usepackage{fullpage}
\usepackage[numbers]{natbib}
\usepackage[margin=1in]{geometry}
\usepackage[hyperindex,breaklinks]{hyperref}
\hypersetup{
  colorlinks,
  citecolor=violet,
  linkcolor=blue,
  urlcolor=blue}

\usepackage[utf8]{inputenc} % allow utf-8 input
\usepackage[T1]{fontenc}    % use 8-bit T1 fonts
\usepackage{hyperref}       % hyperlinks
\usepackage{url}            % simple URL typesetting
\usepackage{booktabs}       % professional-quality tables
\usepackage{amsfonts}       % blackboard math symbols
\usepackage{nicefrac}       % compact symbols for 1/2, etc.
\usepackage{microtype}      % microtypography
\usepackage{xcolor}         % colors
\usepackage{amsthm}

\usepackage{tablefootnote}

\newtheorem{theorem}{Theorem}[section]
\theoremstyle{definition}
\newtheorem{definition}[theorem]{Definition}

\theoremstyle{remark}
\newtheorem{remark}[theorem]{Remark}
\newtheorem{problem}{Open Problem}

\newcommand\extrafootertext[1]{%
    \bgroup
    \renewcommand\thefootnote{\fnsymbol{footnote}}%
    \renewcommand\thempfootnote{\fnsymbol{mpfootnote}}%
    \footnotetext[0]{#1}%
    \egroup
}
\usepackage{amsmath}
\usepackage{amssymb}
\usepackage{algorithm}
\usepackage{algorithmic}
\usepackage{hyperref}
\usepackage{lipsum}
\usepackage{graphicx}
\usepackage{comment}
\usepackage{caption}
\usepackage{subcaption}
\usepackage{pifont}% http://ctan.org/pkg/pifont
\usepackage{epstopdf}
\usepackage{xfrac}

\newenvironment{customthm}[1]{\innercustomthm}{\endinnercustomthm}

\newcommand{\N}{{\mathbb{N}}}

\newcommand{\eat}[1]{}

\newcommand{\eps}{\varepsilon}

\renewcommand{\Pr}[1]{\textbf{Pr}\left[ {#1} \right]}
\newcommand{\Ex}[1]{\mathbb{E}\left[ {#1} \right]}

\newtheorem{lemma}[theorem]{Lemma}
\newtheorem{proposition}[theorem]{Proposition}

\newtheorem{claim}[theorem]{Claim}

\newcommand{\usw}{\textsc{USW}}

% Author remarks
\newcommand{\ms}[1]{\textcolor{magenta}{MS: #1}}
\newcommand{\suho}[1]{\textcolor{blue}{Suho: #1}}

%%%% ALEX SLIVKINS'S FORMATTING
    %%% grammar

    % brackets

    % instead of \paragraph{}
    
    % math
    
         % smart "..."

    % \renewcommand{\eqref}[1]{(\ref{#1})}
    % \renewcommand{\refeq}[1]{Eq.~(\ref{#1})}
% \renewcommand{\algorithmcfname}{ALGORITHM}
% \SetAlFnt{\small}
% \SetAlCapFnt{\small}
% \SetAlCapNameFnt{\small}
% \SetAlCapHSkip{0pt}
% \IncMargin{-\parindent}

% Differentiation packages
\usepackage[thinc]{esdiff}
% Here's demo:
% \begin{alignat*}{3}
%      \diff{f}{x} &\qquad \diff*[4]{g}{t}{t = 1} \\[2ex]
%      \diffp{f}{x} &\qquad\diffp{g}{tu}& & \qquad & \diffp*{g}{{t^2}{u^3}}{(0,0)}
%  \end{alignat*}

% \numberwithin{equation}{section}
% \usepackage{packages/macros-common}
% \usepackage{packages/macros-project-specific}
% \usepackage{packages/mytheorems}

% a very useful package for edits and comments, from David Kempe (USC)
% \usepackage{packages/color-edits}
% \usepackage[suppress]{color-edits}
% \addauthor{ss}{magenta} % ss for Suho

% \renewcommand\thefootnote{\textcolor{blue}{\arabic{footnote}}}

% Choose a citation style by commenting/uncommenting the appropriate line:
%\setcitestyle{acmnumeric}
% Or
%\setcitestyle{authoryear}
%\setcitestyle{authoryear}

% Title. Note the optional short title for running heads. In the interest of anonymization, please do not include any acknowledgements.
\title{Fairness and Efficiency in Online Class Matching}

  \newcommand{\country}[1]{#1.}
  \newcommand{\city}[1]{#1}
  \newcommand{\institution}[1]{#1}
  \newcommand{\email}[1]{Email: \texttt{#1}}
  \newcommand{\affiliation}{\thanks}
\author{
 {MohammadTaghi Hajiaghayi
 \affiliation{
   \institution{University of Maryland}
   \city{College Park}
   \country{USA}
 \email{(hajiagha,schjahan,suhoshin,mss423)@umd.edu}
 }}
 \and
 {Shayan Chashm Jahan\footnotemark[1]
 % \affiliation{
 %   \institution{University of Maryland}
 %   \city{College Park}
 %   \country{USA}
 % \email{schjahan@umd.edu}
 % }}
 }
 \and
 {Mohammad Sharifi
 \affiliation{
   \institution{Sharif University of Technology}
   \city{Tehran}
   \country{Iran}
 \email{mohamadsharify36@gmail.com}
 }}
 \and
 {Suho Shin\footnotemark[1]
 % \affiliation{
 %   \institution{University of Maryland}
 %   \city{College Park}
 %   \country{USA}
 % \email{suhoshin@umd.edu}
 % }}
 }
 \and
 {Max Springer\footnotemark[1]
 % \affiliation{
 %   \institution{University of Maryland}
 %   \city{College Park}
 %   \country{USA}
 % \email{mss423@umd.edu}
 % }}
 }
}

\begin{document}
\maketitle

\begin{abstract}
  The online bipartite matching problem, extensively studied in the literature, deals with the allocation of online arriving vertices (items) to a predetermined set of offline vertices (agents). 
However, little attention has been given to the concept of class fairness, where agents are categorized into different classes, and the matching algorithm must ensure equitable distribution across these classes.

We here focus on randomized algorithms for the fair matching of indivisible items, subject to various definitions of fairness. 
Our main contribution is the first (randomized) non-wasteful algorithm that simultaneously achieves a $1/2$ approximation to class envy-freeness (CEF) while simultaneously ensuring an equivalent approximation to the class proportionality (CPROP) and utilitarian social welfare (USW) objectives. 
We supplement this result by demonstrating that no non-wasteful algorithm can achieve an $\alpha$-CEF guarantee for $\alpha > 0.761$. 
In a similar vein, we provide a novel input instance for deterministic divisible matching that demonstrates a nearly tight CEF approximation.

Lastly, we define the ``price of fairness,'' which represents the trade-off between optimal and fair matching. 
We demonstrate that increasing the level of fairness in the approximation of the solution leads to a decrease in the objective of maximizing USW, following an inverse proportionality relationship.\extrafootertext{This work is partially supported by DARPA QuICC, NSF AF:Small \#2218678, NSF AF:Small \#2114269, Army-Research Laboratory (ARL) \#W911NF2410052, and MURI on Algorithms, Learning and Game Theory.}
\end{abstract}

\section{Introduction} \label{sec:intro}
The rapid advancement of technology and the widespread adoption of online platforms have revolutionized the way we interact, conduct business, and access services. 
From ride-sharing platforms \cite{banerjee2019ride} to online marketplaces \cite{mehta2007adwords}, these platforms connect users with a vast array of resources, creating unprecedented opportunities for dynamic resource allocation. 
However, efficiently matching supply with demand in such \emph{online} environments poses significant challenges, necessitating the exploration of novel algorithms and strategies.

The fundamental online matching problem lies at the core of resource allocation in such platforms. 
Unlike traditional matching problems where the entire set of agents and resources are known in advance, the online matching problem involves making real-time decisions without complete information about future arrivals and requests. 
This inherent uncertainty and dynamic nature render traditional static matching algorithms inadequate, demanding the development of new techniques tailored specifically for online settings.

The objective of the online matching problem is to match as many arriving goods to static (offline) agents as possible in an efficient manner. 
The realization of this objective may vary depending on the specific application context.
However, regardless of the objective, the challenge lies in making immediate decisions while accounting for future arrivals and the scarcity of resources.
The performance evaluation of algorithms designed for this problem is based on their competitive ratio, representing the worst-case approximation ratio between the size of the produced matching and the maximum possible size with complete hindsight. 
It is well known that the best deterministic algorithm can only achieve a $1/2$-approximation and the seminal work of Karp et al.\cite{karp1990optimal} provides a randomized algorithm with a tight $(1-1/e)$-approximation guarantee.

To motivate the study of online matching under fairness constraints \cite{hosseini2022class}, we turn to the real-world challenges presented by modern Internet economics and emerging marketplaces. 
These settings demand solutions that balance transparency with fairness, as emphasized in Moulin’s ``Fair Division in the Internet Age'' \cite{moulin2019fair}. 
In various applications, such as allocating advertisement slots \cite{mehta2007adwords}, assigning packets in switch routing \cite{azar2004zero}, distributing food donations \cite{mertzanidis2024automating}, and matching riders to drivers in ridesharing platforms \cite{dickerson2021allocation}, items or services must be matched to agents immediately and irrevocably as they arrive. 
However, much of the existing work overlooks the need for fairness in matching decisions. 
Consider, for instance, a food bank that must allocate perishable food items upon arrival. 
Ensuring that these resources are distributed equitably across all communities is crucial to addressing fairness concerns in these high-stakes scenarios.

It is for precisely this reason that \cite{hosseini2022class} initiate the study of \emph{class fair matchings} where a set of items arriving online must be assigned to agents, who are partitioned into known classes, with the goal of achieving fairness among classes. 
In this problem, similar to online bipartite matching, agents either like an item (value 1) or do not like it (value 0), however our objective is to ensure equitable treatment of different classes. 
We refer to the standard unfair objective that maximizes the number of matched agents as the utilitarian social welfare (USW). 
When considering classes, the notion of class envy-freeness (CEF) ensures that no class of agents can enhance their overall value by obtaining the items allocated to another class, even if the items are optimally distributed within their own class. 
It is important to note that in cases involving indivisible items, a class envy-free matching may not always be possible (see Figure~\ref{fig:cef_vs_nw}).
Consequently, our research mainly focuses on addressing the central \emph{unresolved question} in the online class fair matching problem posed by~\cite{hosseini2022class}: 

\begin{figure}
     \centering
     \begin{subfigure}[b]{0.3\textwidth}
         \centering
         \includegraphics[width=\textwidth]{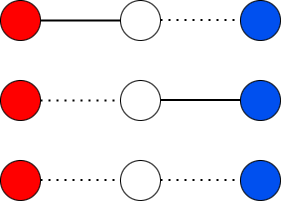}
         \caption{CEF Matching}
         \label{fig:cef_ex}
     \end{subfigure}
     \hspace*{5em}
     \begin{subfigure}[b]{0.3\textwidth}
         \centering
         \includegraphics[width=\textwidth]{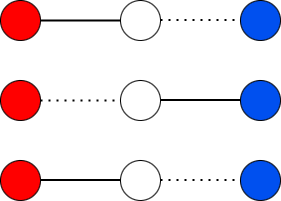}
         \caption{NW Matching}
         \label{fig:nw_ex}
     \end{subfigure}
    \caption{Examples of class envy-free (CEF) and non-wasteful (NW) matchings where bolded lines indicate a matching. Red nodes indicate agents in the first class, blue nodes indicate agents in the second class, and white nodes indicate items.}
    \label{fig:cef_vs_nw}
\end{figure}

\begin{problem}\label{main-problem}
    Can a randomized algorithm for matching indivisible items achieve any reasonable CEF approximation together with either non-wastefulness or a USW approximation?
\end{problem}

\subsection{Our Results}

Our work mainly focuses on \emph{randomized} algorithms for matching \emph{indivisible} items in a fair manner, subject to the varying definitions of fairness adapted from the fair division literature. 
Notably, we provide the first non-wasteful algorithm that simultaneously obtains approximate class fairness guarantees in expectation, resolving Open Problem~\ref{main-problem} posed by~\cite{hosseini2022class} in an affirmative manner. 
This algorithm is a natural random matching procedure that exhibits notable constant factor approximations in spite of its simplicity. 
Our main algorithmic result is stated formally as follows.

\begin{theorem}[Randomized algorithm; informal]\label{thm:rand-ind-lower}
    For randomized matching of indivisible items, the \textsc{Random} algorithm satisfies non-wastefulness, $\frac12$-CEF, $\frac12$-CPROP, and $\frac12$-USW. 
\end{theorem}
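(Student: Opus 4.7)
The plan is to first pin down what the \textsc{Random} algorithm does: upon the arrival of each item $g$, let $U(g)$ be the set of all currently unmatched agents (across all classes) who value $g$; if $U(g)\neq\emptyset$, match $g$ to an agent drawn uniformly at random from $U(g)$, and otherwise discard $g$. Non-wastefulness is then immediate from the allocation rule. For the $\frac12$-USW guarantee, I would invoke the folklore observation that any non-wasteful matching is a maximal matching in the bipartite graph whose edges encode valuations, and every maximal matching contains at least half as many edges as a maximum matching (because its matched vertices form a vertex cover of the graph's edges). Since the bound holds pointwise on every realization, it holds in expectation.

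For the $\frac12$-CEF guarantee, fix two classes $c$ and $c'$ and let $A_c, A_{c'}$ denote their (random) item bundles. The goal is $\mathbb{E}[v_c(A_c)]\geq \frac12\,\mathbb{E}[v_c(A_{c'})]$, where $v_c(S)$ is the maximum number of items in $S$ that can be simultaneously matched to agents of class $c$ who value them. The approach is a per-item charging argument driven by the following dichotomy at the moment an item $g$ is matched to an agent in class $c'$: either (i) there exists an unmatched agent in class $c$ who values $g$, in which case by the uniform tiebreaking rule, conditioning on $g$ going to $c\cup c'$, the probabilities of $g$ going to $c$ and to $c'$ are proportional to the respective counts of interested unmatched agents, so $c$ is at least as likely to receive $g$ as $c'$ whenever $c$ has at least as many interested unmatched agents; or (ii) every agent in class $c$ who values $g$ has already been matched to an earlier item, and those earlier allocations witness that $g$ cannot contribute to $v_c(A_{c'})$ beyond what $v_c(A_c)$ already captures.

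Turning the dichotomy into a clean inequality requires constructing, for each realization of the randomness, an injection from the items in $A_{c'}$ that class $c$ could actually use (i.e., that participate in an optimal reassignment inside $c$) to items in $A_c$, at a $2{:}1$ ratio. I would build this injection greedily in arrival order: each time $g$ gets assigned to $c'$ and $c$ could have used $g$, I either pair $g$ with the ``twin'' item that went to $c$ at the same step (case (i), using the symmetry from uniform sampling to show this happens with probability at least $1/2$), or I pair $g$ with the earlier item of $A_c$ that blocked every would-be recipient (case (ii)). Taking expectation over the random choices converts the conditional $1/2$ factor into an overall $1/2$-CEF bound. The $\frac12$-CPROP bound is then obtained by replacing class $c'$ in the argument with the proportional-share benchmark, which by definition allocates an $|c|/\sum_{c''}|c''|$ fraction of the value each item carries for class $c$; the same per-item dichotomy applies.

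The main obstacle I anticipate is case (ii): making precise the claim that when no agent of class $c$ who values $g$ is available at the time of $g$'s arrival, the prior matches into $c$ already ``pay for'' $g$'s contribution to $v_c(A_{c'})$. This requires a Hall-type or König-type argument combined with the adaptive nature of the random process — the set of available agents depends on previous random choices, so the injection must be defined online and verified to be consistent across sample paths. I would handle this by maintaining an explicit matching between ``debts'' (items in $A_{c'}$ awaiting a charge) and ``credits'' (items in $A_c$) that is updated at each arrival, and by showing this structure is preserved as an invariant throughout the execution.
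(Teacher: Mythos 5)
There is a genuine gap, and it starts with the algorithm itself. \textsc{Random} in this paper is \emph{not} ``pick a uniformly random unmatched interested agent across all classes.'' It first picks, uniformly at random, a \emph{class} among those containing an unmatched interested agent, and only then picks a uniformly random such agent within that class. This nested randomization is the whole point: it guarantees that whenever class $i$ has an available interested agent, the item goes to class $i$ with probability at least as large as the probability it goes to any other class $j$, \emph{regardless of how many agents each class has}. Under the flat rule you describe, that fact fails: you yourself note the probabilities are proportional to the counts of interested unmatched agents, and the adversary controls those counts. Concretely, take class $1$ with $n$ agents, class $2$ with $n^2$ agents, and $n$ items each liked by every agent; the flat rule sends almost every item to class $2$, so $\Ex{V_1(X)} = O(1)$ while $V_1^*(Y_2(X)) = n$, and the CEF ratio tends to $0$. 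So your case (i) does not yield a usable bound, and the $\tfrac12$-CEF claim is false for the algorithm you analyze. (Non-wastefulness and $\tfrac12$-USW are fine and match the paper's Proposition~\ref{prop:nw-usw}.)

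Your case (ii) intuition --- that when every interested agent of class $c$ is already saturated at $g$'s arrival, the earlier matches ``pay for'' $g$'s contribution to $V_c^*(A_{c'})$ --- is essentially the paper's device: it adds a dummy copy of such an item to an augmented bundle $A_i \supseteq X_i$, shows $V_i^*(A_i) \le 2 V_i(X)$ via maximality (the dummies can only be optimistically matched to already-saturated agents), and shows $\Ex{V_i^*(A_i)} \ge \Ex{V_i^*(Y_j(X))}$ using the equal class-level probabilities. But you leave the $2{:}1$ injection as a plan (``maintain an invariant'') rather than a proof, and without the class-level symmetry the per-item probability comparison that drives it is unavailable. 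Finally, the CPROP step does not go through as sketched: in this paper $\text{prop}_i = \max_{X}\min_j V_i^*(Y_j(X))$ is a max--min over divisible matchings of the \emph{subadditive} optimistic valuation, not an additive per-item $|N_i|/|N|$ share, so ``the same per-item dichotomy'' does not apply. The paper instead bounds each agent's matching probability by $1 - e^{-\sum_o \tilde x_{a,o}}$ and integrates against the water-level function from the Equal-Filling analysis of \cite{hosseini2022class} to extract the factor $\tfrac12$.
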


We highlight that the analysis of the various approximate fairness guarantees is highly non-trivial due to the non-additive nature of the objective functions, to be discussed more formally in Section~\ref{sec:algo}. 
In exploring the tightness of our algorithmic guarantees, we additionally construct an upper bound adversarial input that demonstrates the limits on achievable fairness in this problem setting.

\begin{theorem}[Indivisible CEF upper bound; Informal]\label{thm:rand-ind-upper}
    Any non-wasteful (possibly randomized) algorithm cannot achieve an $\alpha$-CEF guarantee for $\alpha > \frac{e^2-1}{e^2+1} \approx 0.761$.
\end{theorem}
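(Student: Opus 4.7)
The plan is to construct an adversarial online instance over two agent classes for which any non-wasteful (possibly randomized) algorithm incurs class envy of at least the factor $\frac{e^2-1}{e^2+1}=\tanh(1)$. The specific appearance of $\tanh(1)$---rather than the routine bounds $1/2$ or $1-1/e$---signals that the construction should involve a KVV-style staircase applied to two classes in tandem, with the sharp constant emerging from a continuous-time or variational analysis that couples two exponential effects.

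First I would set up the instance. Take two classes $A=\{a_1,\dots,a_n\}$ and $B=\{b_1,\dots,b_n\}$ and design an item-arrival sequence whose preference structure is a KVV-style staircase that interleaves the two classes so that (i) each arriving item is acceptable to some agents of both classes, so a non-wasteful algorithm is forced to decide which class to serve; (ii) serving one class now precludes future matches within that class in a KVV-like manner; and (iii) the opposite class retains the ability to match items assigned to the first class with high value, so that the CEF-envy swap is nearly optimal. A natural template is a mirrored staircase in which the preference indices decrease for one class while remaining rich for the other, ensuring that whichever class is ``spent'' first suffers the $(1-1/e)$ KVV bound while its counterpart inherits the corresponding envy.

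Next I would apply Yao's minimax principle, reducing the upper bound against arbitrary randomized algorithms to an expected bound against deterministic non-wasteful algorithms on a random instance (obtained, for example, by randomly permuting the arrival order of the staircase and randomly labeling the classes). For each class $C$, a KVV-style primal-dual or potential argument would control the expected matched fraction $v(C)/n$ by an $e^{-x}$-type decay at normalized time $x$, while the envy quantity $\max\text{-}\mathrm{matching}(M_{\neg C}\to C)$ would be controlled in the opposite direction, growing like $1-e^{-x}$, because the staircase structure makes the maximum matching greedy-computable. Optimizing the time horizon and the algorithm's class-allocation profile then yields the fixed-point ratio $(e-e^{-1})/(e+e^{-1})=(e^2-1)/(e^2+1)$.

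The hardest step will be turning the heuristic $\tanh(1)$ calculation into a rigorous upper bound that holds against every randomized non-wasteful algorithm, not just \emph{natural} ones such as uniform random selection. This requires coupling the two per-class exponential processes so that any improvement in one class's value $v$ forces a symmetric deterioration in the envy-swap value of the other class, leading to a linear ODE whose fundamental solutions are $e^{x}$ and $e^{-x}$ and whose normalization yields $\tanh$. A secondary obstacle is that CEF is defined via an optimal swapped matching rather than a linear functional of the assignment, so bounding envy requires exploiting the staircase's monotone structure to argue the cross-class maximum matching is computed greedily, which gives a clean closed form.
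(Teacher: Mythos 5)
Your construction is pointed in the right direction and matches the paper's in outline: the paper also takes two classes, puts the Karp--Vazirani--Vazirani upper-triangular staircase on class~1, and shows that any non-wasteful algorithm is forced to behave like uniform random selection at the class level (a symmetrization argument playing the role of your Yao reduction), after which the bound follows from a stopping-time analysis. One structural difference worth flagging: the paper's second class is \emph{not} a mirrored staircase but a class in which each arriving item has a single dedicated agent. This is what makes the argument go through cleanly --- it removes all within-class-2 decisions, forces non-wastefulness to route every item somewhere, guarantees class~2 absorbs everything once class~1 is exhausted, and (because the full item set has a perfect matching into class~1's nested neighborhoods) makes the optimistic valuation $V_1^*(Y_2(X))$ equal to $|Y_2(X)|$, which is the ``greedy computability'' you correctly identify as necessary.

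The genuine gap is in the quantitative core. Your proposed mechanism --- matched value decaying like $e^{-x}$, envy growing like $1-e^{-x}$, and a coupled linear ODE with fundamental solutions $e^{\pm x}$ whose normalization yields $\tanh$ --- is not a derivation and does not, as stated, produce $(e^2-1)/(e^2+1)$: the ratio $e^{-1}/(1-e^{-1})$ is not $\tanh(1)$, and no fixed-point of the two curves you describe gives it either. In the paper the constant arises differently: one tracks $n_1(t)$, the number of still-matchable class-1 agents, against $x(t)$, the number of remaining items. Because each item goes to class~1 with probability only $\tfrac12$, and a class-1 match destroys the ``optimal'' agent with probability $1-1/n_1$, the drift is $\Ex{\Delta n_1}/\Delta x = \tfrac12\bigl(1 + (2n_1-1)/x\bigr)$, whose fluid limit $u = s\bigl(1+\tfrac12\ln s\bigr)$ (with $s = x/n$) vanishes at $s = e^{-2}$. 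Hence class~1 stops at time $\tau = n(1-e^{-2})$ with $V_1 = \tfrac{n}{2}(1-e^{-2})$, class~2 receives $\tfrac{n}{2}(1+e^{-2})$, and the ratio $\tfrac{1-e^{-2}}{1+e^{-2}} = \tfrac{e^2-1}{e^2+1}$ only incidentally rewrites as $\tanh(1)$. Without identifying the correct stochastic process (surviving class-1 agents, with the factor-of-two drift coming from the half-probability of serving class~1 combined with suboptimal within-class matching), your sketch cannot be completed as written.
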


This upper bound construction builds upon the results of \cite{karp1990optimal} to demonstrate that CEF cannot be achieved and will be presented in Section~\ref{sec:imposs} with a complete analysis found in Section~\ref{sec:omit-4}. 
We additionally note that while our results show a deviation in the guarantees from traditional fair division problems, certain properties nicely translate to our problem setting -- we expound on this fact in Appendix~\ref{sec:other} with a comprehensive discussion on the connections between \emph{class} Nash welfare and CEF1.

In Section~\ref{sec:divisible}, we additionally provide a strengthened upper bound for the \emph{divisible} matching setting through a careful input construction which further resolves an open problem left by \cite{hosseini2022class}.

\begin{theorem}[Divisible CEF upper bound; Informal]\label{thm:divisible}
	No deterministic algorithm for divisible matching can achieve $\beta$-CEF for any $\beta > 0.677$.
\end{theorem}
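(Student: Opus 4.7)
The plan is to exhibit a single family of adversarial arrival sequences, parametrized by an integer scale $n$, such that for any deterministic online algorithm for divisible matching, some instance in the family forces the CEF approximation ratio to tend to (at most) $0.677$ as $n \to \infty$. Since the algorithm is deterministic and divisible, its behavior after each arrival is a fixed function of the history; the adversary's job is to exploit this by branching the arrival sequence based on the algorithm's committed fractional allocation.

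I would construct the instance with two classes $C_1, C_2$ of equal size, and split the arrivals into an \emph{ambiguity phase} and a \emph{reveal phase}. In the ambiguity phase, a block of items arrives that are acceptable to agents in both classes, forcing the algorithm to commit to some fractional split $x$ of total mass between the two classes. In the reveal phase, the adversary inspects $x$ and injects one of two continuations: either a block of items acceptable only to $C_1$, or a block acceptable only to $C_2$. The continuation is picked so as to maximize the envy that class $\bar{i}$ has towards class $i$, leveraging the fact that the redistribution step inside CEF allows the envying class to optimally reassign the other class's bundle within its own agents. The reveal is calibrated so that, whichever way $x$ was chosen, the class that ``lost'' in the ambiguity phase also has the most to gain by envying the other class's allocation.

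The approximation bound then reduces to a min-max computation: the algorithm picks $x$ to minimize the worst-case envy ratio, while the adversary picks the continuation and (across instances in the family) the relative sizes of the ambiguity block and the reveal block. After writing down the resulting piecewise-defined expression for the CEF ratio as a function of $x$ and the block-size ratio, I would differentiate/solve for the saddle point; the value $0.677$ should emerge as the unique real solution of a transcendental equation, most plausibly of the form $\beta = 1 - e^{-g(\beta)}$ or a root of a small system coming from equating the two branches of the adversary's choice at the algorithm's optimal $x$. A limiting argument as $n\to\infty$ (or by taking the block sizes to be arbitrarily large multiples of class size) is then used to make any integrality/boundary slack vanish.

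The main obstacle, and the reason the construction requires care, is correctly modeling the ``optimal redistribution within the envying class'' that appears inside the CEF definition: once $C_1$ is handed $C_2$'s fractional bundle, its own agents may absorb items only up to their capacities and only along the bipartite edges that exist in the instance. The adversary's reveal must therefore be designed so that (i) the redistribution can actually attain a high value inside the envying class, creating a large denominator in the CEF ratio, while (ii) the algorithm's own commitment is structurally prevented from achieving the matching symmetric counterpart, creating a small numerator. Getting these two features to coexist in a single divisible instance, and verifying that no clever fractional hedging by the algorithm can escape the trap for any value of $x$, is the technically delicate part; once the gadget is in place, the rest is an extremum computation that yields the stated $0.677$ threshold.
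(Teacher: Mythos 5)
There is a genuine gap: your construction has no mechanism that can produce the value $0.677$. A two-phase ``ambiguity block then reveal block'' adversary, where each block consists of homogeneous items (liked by both classes, then by only one class), yields a min-max over piecewise-linear functions of the algorithm's split $x$, whose saddle point is a rational number --- this is essentially how the previously known $3/4$ upper bound is obtained, and with divisible items the algorithm's ability to hedge fractionally at the class level means you cannot push below it this way. You correctly anticipate that a transcendental equation of the form $\beta = 1 - e^{-g(\beta)}$ must appear, but nothing in your gadget generates an exponential: that requires a sequence of items with progressively shrinking neighborhoods, so that the algorithm is forced to hedge \emph{among agents within a class} and loses capacity at a harmonic rate $\sum_i \frac{1}{n-i+1}$, whose limit produces the $e^{-(\cdot)}$ term. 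This intra-class structure, not inter-class branching, is the missing key idea.

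The paper's instance is essentially fixed rather than adaptively branched: Class 1 gets a doubled upper-triangular (Karp--Vazirani--Vazirani--style) neighborhood structure in which items $2i-1,2i$ are liked by $n-i+1$ agents, while Class 2 is a complete bipartite graph that likes every item. Two lemmas then pin down the algorithm's behavior: (i) within Class 1 it must spread each item equally over the currently eligible agents (it cannot know which agent the adversary will drop next), and (ii) because Class 2 envies everything given to Class 1 at full value, any $\beta$-CEF algorithm must split each item in ratio $1 : \beta$ between Class 1 and Class 2. The equal spreading causes Class 1 to saturate after only a $1 - e^{-1/(1+\alpha)}$ fraction of the item pairs (with $\alpha = \frac{1-\beta}{1+\beta}$), after which everything flows to Class 2 and Class 1's envy explodes; balancing gives $\beta \le \frac{2}{1+\beta}\bigl(1 - e^{-(1+\beta)/2}\bigr)$, hence $\beta \le 0.677$. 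Your proposal would need to replace the homogeneous ambiguity block with such a nested-neighborhood gadget inside one class (keeping the other class complete so it both absorbs the overflow and enforces the $1:\beta$ split) before the extremum computation you describe could yield the stated bound.
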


Finally, in an effort to further quantify the inherent gap between fair and unfair solutions to the online matching problem, we conclude the paper with a definition for the ``price of fairness'' which is intuitively the necessary trade-off between an optimal and a fair matching. 
We demonstrate that as we strive for a higher level of fairness in the approximation of the solution, the objective of maximizing utilitarian social welfare (USW) deteriorates, exhibiting an inverse proportionality relationship.

\begin{theorem}[Price of fairness; Informal]\label{thm:price-of-fair}
    For any $\eps>0$, there exists a problem instance such that no (possibly randomized) online algorithm that guarantees $\alpha$-CEF can achieve USW larger than $\frac{1}{1+\alpha} + \eps$.
\end{theorem}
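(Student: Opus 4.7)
The plan is to construct an adversarial online instance that isolates the tension between $\alpha$-CEF and USW. I would use two classes and an instance with two possible phases: in phase one, items liked by both classes force the algorithm to commit to a split without knowing the future, while in a potential phase two a burst of items favoring only one specific class arrives. Any fixed phase-one split is punished by at least one scenario -- being too aggressive to a class $c$ either violates CEF under the truncated (phase-one only) scenario or wastes capacity of $c$ when the phase-two burst for $c$ arrives.

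To obtain a bound against randomized algorithms, I would apply Yao's minimax principle and analyze the best deterministic algorithm against a carefully weighted distribution over the truncated instance and the extended instances (one per choice of burst-favored class). The first analytic step is to translate $\alpha$-CEF into linear constraints on the expected phase-one per-class allocation. In the truncated instance $\alpha$-CEF binds in both directions, requiring the expected matches to the two classes to lie within a factor of $\alpha$ of each other. In each extended instance only one direction of envy is active, because the phase-two burst items are not liked by the non-favored class and therefore do not contribute to $v_c(M_{c'})$ for that class $c$.

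The next step is to set up and solve the LP whose objective is the minimum USW approximation ratio across the scenarios in the support of the Yao distribution, subject to the CEF inequalities and the simplex constraints on the allocation probabilities. The additive $\eps$ slack in the theorem statement is absorbed by taking the number of items per phase to infinity, which makes integrality slack and per-item variance negligible, so only the limiting fractional allocations matter.

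The hard part will be tuning the class sizes and the relative lengths of the two phases so that the LP's optimum pinches at exactly $\tfrac{1}{1+\alpha}$ rather than a looser quantity such as $\tfrac{1+\alpha}{1+2\alpha}$ that the naive equal-size two-phase construction yields. This likely requires calibrating the phase-two burst length to roughly $\alpha$ times the phase-one length (or an analogous asymmetric choice) together with a supporting exchange argument that a uniform per-item allocation within each phase is worst-case optimal for the algorithm, so reasoning over aggregate phase-one fractions is without loss of generality.
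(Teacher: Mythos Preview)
Your high-level template (two phases; phase-one items liked by everyone so CEF forces a commitment, phase-two burst to one class so that commitment wastes capacity) matches the paper's construction. The genuine gap is the restriction to two classes: this cannot reach $\tfrac{1}{1+\alpha}$, and the tuning you propose will not close it.

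Concretely, with one ``small'' class of $q$ agents and one ``large'' class, the only CEF constraint that bites after phase one is that the small class must hold at least $\alpha\cdot\min(c_{\text{large}},q)\le \alpha q$ items. So the forced waste in phase two is at most $\alpha q$, while the optimal USW is at least $q+(\text{phase-one length})$. Optimizing the phase-one length subject to CEF being tight gives a ratio of $\tfrac{2}{2+\alpha}$, and symmetric two-class variants with Yao over which side gets the burst do no better (you recover the $\tfrac{1+\alpha}{1+2\alpha}$ or $3/4$-type bounds you already noticed). The barrier is structural: one small class can be forced to absorb only $O(\alpha q)$ phase-one items, which is too little waste relative to an optimum that scales with both classes' capacities.

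The paper's fix is exactly to multiply the number of small classes: take $k-1$ small classes of $q$ agents each and one large class of $q(k-1)$ agents, with phase one of length $p(k-1)+q$ (where $p/q\approx\alpha$) and a phase-two burst of $q$ class-$i$-only items for every $i\le k-1$. Now the CEF constraint after phase one applies to each small class against the large one, forcing in aggregate at least $p(k-1)\approx \alpha q(k-1)$ items into the small classes, so the algorithm's USW is at most $qk$ while the optimum is $p(k-1)+qk$. The resulting ratio is $\tfrac{k}{k+\alpha(k-1)}$, which tends to $\tfrac{1}{1+\alpha}$ only as $k\to\infty$; for $k=2$ it is precisely $\tfrac{2}{2+\alpha}$. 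So the missing idea in your plan is to let the number of classes grow, not to tune sizes within a two-class instance.
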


% \begin{table*}[t]
% \centering
% \begin{tabular}{c | ccccc}
%                        & \textbf{USW}        & \textbf{CEF}  & \textbf{CPROP} & \textbf{NW}   \\ \hline
% \textbf{Algorithm} & $1/2$ & $1/2$ & $1/2$ & \cmark \\ \hline
% \textbf{Upper Bound} & $1-1/e$ & $\frac{e^2-1}{e^2+1}$ & $1-1/e$ & \cmark \\ \hline
% \end{tabular}
% \caption{The summary of our results on randomized algorithms for matching indivisible items. Each algorithm achieves its three guarantees simultaneously, while the upper bound holds for any algorithm, separately for each guarantee. \todo{adjust to include divisible setting results?}}
%  \label{tab:1}
% \end{table*}

\begin{table}[t]
\centering
\begin{tabular}{c|ccc|c|ccc}
\textbf{Indivis.} &
  USW &
  CEF &
  CPROP &
  \textbf{Divis.} &
  USW &
  CEF &
  CPROP \\ \hline
Alg. &
  $\sfrac12$ &
  $\sfrac12$ &
  $\sfrac12$ &
  Alg. &
  $\sfrac12$ \cite{hosseini2022class} &
  $1-\sfrac{1}{e}$ \cite{hosseini2022class} &
  $1-\sfrac{1}{e}$ \cite{hosseini2022class} \\ \hline
Bound &
  $1-\sfrac{1}{e}$ \cite{karp1990optimal}&
  $\frac{e^2-1}{e^2+1}$ &
  $1-\sfrac{1}{e}$  \tablefootnote{This fact trivially holds when considering the upper-triangular construction of \cite{karp1990optimal} and $k=1$ in our setting.}  &
  Bound &
  $1-\sfrac{1}{e}$ \cite{karp1990optimal} &
  0.67 &
  $1-\sfrac{1}{e}$ \cite{hosseini2022class}
\end{tabular}
\caption{The summary of our results on randomized algorithms. Each algorithm achieves its three guarantees simultaneously, while the upper bound holds for any algorithm, separately for each guarantee. Results from prior works in the divisible setting are noted with citation for completeness.}
 \label{tab:1}
\end{table}

\subsection{Related Work}
\paragraph{Online Matching.} For an extensive exploration of the vast literature on online matching, we refer readers to \cite{mehta2013online}, while summarizing key findings relevant to this paper. 
The seminal work by \cite{karp1990optimal} introduces the \textsc{Ranking} algorithm which, in our problem setting, corresponds to a randomized algorithm for matching indivisible items that achieves a utilitarian social welfare (USW) approximation of $1 - 1/e$. 
In the fractional matching domain, an identical result is achieved with a deterministic algorithm \cite{kalyanasundaram2000optimal}. 
The literature further explores randomized input models of online matching problems to surpass this well-known $1 - 1/e$ barrier. 
For online vertices that arrive in a random order (reducing the power of an adversarial input) \cite{mahdian2011online} and \cite{karande2011online} demonstrate that the competitive ratio of the Ranking algorithm falls between 0.696 and 0.727. 
Moreover, \cite{huang2019online} propose a variant of Ranking that surpasses the $1 - 1/e$ barrier in vertex-weighted online matching under random-order arrivals, with a further improvement to 0.668 presented by \cite{jin2021improved}. 
In stochastic matching, where items are drawn from a known distribution, the best-known competitive ratios for unweighted and vertex-weighted online stochastic matching are 0.711 and 0.700, respectively \cite{feldman2009online,huang2021online}.

\paragraph{Fair Division.} In the offline setting, envy-freeness (EF) and proportionality (PROP), along with their approximations, are commonly employed as criterions of fairness in the allocation of items to agents. 
For divisible items, an allocation which is envy-free and pareto optimal (PO) always exists \cite{varian1973equity} and can be computed efficiently when agent valuation functions are additive \cite{eisenberg1959consensus}. 
For indivisible items, such allocations are not guaranteed to exist. 
As such, two relaxations are commonly studied: envy-freeness up to one item (EF1) \cite{lipton2004approximately} and maximin share fairness (MMS) \cite{budish2011combinatorial}. 
An allocation that satisfies EF1 is guaranteed to exist when valuations are monotone, and are further PO when agents have additive valuations \cite{caragiannis2019unreasonable}. 
However, the existence of MMS allocations is not guaranteed, even for additive valuations. 
Nevertheless, there are various polynomial time approximation algorithms \cite{garg2020improved,ghodsi2018fair,hosseini2022ordinal,hosseini2021guaranteeing,procaccia2014fair}. 
In our fair matching problem, we effectively compute an online allocation that satisfies the aforementioned fairness criteria by treating each class as an agent within the system. 

\paragraph{Fair Online Matching.} Most closely related to our work is the preliminary work of \cite{hosseini2022class} that introduces the online class fair problem. 
This paper provides the initial results for approximate guarantees on the objectives of class envy-freeness (CEF), classs proportionality (CPROP), class maximin share (CMMS), and utilitarian social welfare (USW). 
The authors' contributions offer nearly tight results for deterministic algorithms in both the context of indivisible and divisible item matching. 
While the authors achieve a 0.593-CPROP approximation guarantee for indivisible matching using a randomized algorithm, they do not address the challenge of simultaneously achieving a class envy-freeness (CEF) guarantee while ensuring non-wastefulness. 
This crucial open problem serves as the primary focus of our study. 
Furthermore, we explore various impossibility results for algorithms that align with the fairness-agnostic online matching literature, shedding light on the limitations and constraints of such approaches. 
While many other works examine online fair division \cite{aleksandrov2015online, benade2018make, gorokh2020online, walsh2011online, zeng2020fairness}, the majority restrict attention to additive valuations, rendering their techniques inapplicable to the matching setting. 
Finally, the recent work by~\cite{yokoyamaasymptotic} proposes a non-wasteful algorithm which guarantees CEF with high probability when the number of agents approach infinity.

\paragraph{Price of Fairness}
The first set of results on the Price of Fairness (PoF) traces back to~\cite{bertsimas2011price} and~\cite{caragiannis2012efficiency}. 
\cite{bertsimas2011price} analyze the upper bound on the utility loss (specifically, egalitarian social welfare) incurred by fairness notions such as proportional fairness and max-min fairness in the allocation of divisible goods. 
A key takeaway from their results is that for a small number of players, the PoF remains relatively low; for example, for two players, the PoF for proportional fairness is at most 8.6\%, and for max-min fairness, it is 11.1\%. 
\cite{caragiannis2012efficiency} further extend these results by examining fairness notions like proportionality, envy-freeness, and equitability for both divisible and indivisible goods and chores. 
However, as~\cite{bei2021price} highlight, a significant limitation in the indivisible setting is that the guarantees do not hold for every problem instance, as the results are not framed as a worst-case analysis. 
To address this, \cite{bei2021price} investigate PoF under worst-case scenarios for various fairness criteria, including Nash social welfare, envy-freeness up to one good, balancedness, and egalitarian social welfare. 
It is important to note that these results do not directly apply to our setting, as the notion of class envy-freeness is not equivalent to any of the properties discussed above.

\section{Model} \label{sec:model}
For $t \in \N$, define $[t] = \{1, ..., t\}$. Consider a bipartite graph $G = (N,M,E)$ where $N$ represents a set of vertices henceforth referred to as ``agents'', $M$ a set of vertices called ``items'', and $E$ the set of incident edges. 
We say that $a \in N$ likes item $o \in M$ if the two are adjacent in $G$, i.e., $(a,o) \in E$. 
The set of agents $N$ is partitioned into $k$ (known) classes $N_1, ..., N_k$ such that $N_i \cap N_j = \emptyset$ for all $i \neq j$ and $\bigcup_{i=1}^k N_i = N$. 
We slightly abuse notation and call class $N_i$, class $i$.

\paragraph{Matching.} We denote by the matrix $X = (x_{a,o})_{a \in N, o \in M} \in \{0,1\}^{N \times M}$ a matching, where each $x_{a,o}$ indicates if item $o$ is matched to agent $a$. 
For divisible matchings, we replace $\{0,1\}$ by $[0,1]$. Given such a matching, we refer to an agent as \emph{saturated} when $\sum_{o \in M} x_{a,o} = 1$ and an item as \emph{assigned} if $\sum_{a \in N} x_{a,o} = 1$.

For some matching $X$, we let $Y(X) = \left( \sum_{a \in N_i} x_{a,o} \right)_{i \in [k], o \in M}$ be the matrix containing the items assigned to agents within each class.
Further let $Y_i(X)$ denote the row of $Y(X)$ corresponding to class $i$. 
More specifically, this is the set of items matched to agents in class $i$:
\begin{align*}
    Y_i(X) = \{ o \in M : x_{a,o} = 1 \text{ for some } a \in N_i \}.
\end{align*}

\paragraph{Class Valuations.} For agent $a \in N$, the value of a matching $X$ is given by $$V_a(X) = \sum_{o \in M : (a,o) \in E} x_{a,o}$$ and we slightly abuse notation by defining the value of class $i$ from matching $X$ to be $V_i(X) = \sum_{a \in N_i} V_a(X)$. 
This is the so-called \emph{utilitarian social welfare} (USW) of the matching for the agents in class $i$, and is equivalent to the standard matching size objective in the online bipartite matching literature.

Given a vector $\mathbf{y} = (y_o)_{o \in M} \in \{0,1\}^M$ representing the allocation of different items, the \emph{optimistic valuation}, $V_i^*(\mathbf{y})$, of class $i$ for $\mathbf{y}$ is the size of the maximum matching between the agents of $N_i$ and the items of $\mathbf{y}$. 
$V_i^*$ is equivalent to the maximum size of the integral matching between the items and agents in $N_i$ which can be computed with a standard LP -- we defer the reader to \cite{hosseini2022class} for more exposition on this definition.
We emphasize that the optimistic valuation function is \emph{subadditive}, and not additive as is a standard assumption leveraged in the fair division literature to obtain many algorithmic guarantees. 
As demonstrated in \cite{hosseini2022class}, it is exactly this functional property that prevents any deterministic algorithm for indivisible matchings from being non-wasteful and CEF1. 
Moreover, this aspect of the problem instance will necessarily make the analysis of our algorithmic guarantees and impossibility results non-trivial as compared to their additive counterparts.

\subsection{Definitions of Fairness}
The aim in the present work is to distribute the arriving goods among classes in a way that respects certain fairness criteria or principles. 
The commonly studied fairness criteria that we use here are envy-freeness \cite{foley1966resource}, and proportionality \cite{steinhaus1948problem}.

For envy-freeness, we compare the value of $V_i(X)$ for the matched items to class $i$ and $V_i^*(Y_j(X))$, the optimistic value of class $j$'s matching according to $i$. 
Note that the optimistic valuation is necessarily larger than what could be obtained in the online model, so this is a particularly strong notion of fairness. 

\begin{definition}[Class Envy-Freeness]
    A matching $X$ is $\alpha$-\textbf{class envy-free} ($\alpha$-CEF) if for all classes $i,j \in [k], V_i(X) \ge \alpha \cdot V_i^*(Y_j(X))$. For $\alpha = 1$, we simply call the matching \textbf{class envy-free} (CEF).
\end{definition}

In general, CEF allocations cannot be guaranteed for indivisible matchings (ie. one item arrives to be distributed across two classes). 
We thus also consider the relaxed notion of class envy-freeness up to one item, consistent with the EF1 notion introduced by \cite{lipton2004approximately}.

\begin{definition}[Class Envy-Freeness Up to One Item]
    A matching $X$ is $\alpha$-\textbf{class envy-free up to one item} ($\alpha$-CEF1) if for every pair of classes $i,j \in [k]$, either $Y_j(X) = \emptyset$ or there exists an item $o \in Y_j(X)$ such that $V_i(X) \ge \alpha \cdot V_i^*(Y_j(X) \setminus \{ o \})$. When $\alpha = 1$, we simply refer to the matching as \textbf{class envy-free up to one item} (CEF1).
\end{definition}

At the class level, the \emph{proportional share} of class $i$ is defined as 
\begin{align*}
    \text{prop}_i = \max_{X \in \mathcal{I}} \min_{j \in [k]} V_i^*(Y_j(X))
\end{align*}
where $\mathcal{I}$ is the set of (possibly divisible) matchings of the set of items $M$ to the set of agents $N$.

\begin{definition}[Class Proportional Fairness]
    We say that a matching $X$ is $\alpha$-\textbf{class proportional} ($\alpha$-CPROP) if for every class $i \in [k], V_i(X) \ge \alpha \cdot \text{prop}_i$. When $\alpha = 1$, we simply call this \textbf{class proportional} (CPROP).
\end{definition}

We highlight that \cite{hosseini2022class} demonstrate that any algorithm which assigns deterministically at the class level or within classes must be at best $\frac12$-CPROP. 
Therefore, we must introduce randomness at both stages to surpass the performance of a deterministic algorithm. 
This further motivates the present studies' emphasis on randomized algorithms.

\subsection{Definitions of Efficiency.}
We consider two notions of efficiency. The first, non-wastefulness, ensures that no item is discarded if a matching is possible. 
In our integral assignment setting, non-wastefulness corresponds to a \emph{maximal} matching.

\begin{definition}[Non-Wastefulness]
    We say that a matching $X$ is \textbf{non-wasteful} (NW) if there is no pair of agent $a$ and item $o$ such that $a$ likes $o$, $a$ is not saturated, and $o$ is not fully assigned.
\end{definition}

The second efficiency measure is utilitarian social welfare which quantifies the size of the resultant matching. 
This is consistent with the classical objective for online matching when ignoring considerations of fairness.

\begin{definition}[Utilitarian Social Welfare]
    The \textbf{utilitarian social welfare} (USW) of a matching is given by usw$(X) = \sum_{a \in N} \sum_{o \in M : (a,o) \in E} x_{a,o}$. We say that a matching is $\alpha$-USW if usw$(X) \ge \alpha \cdot \text{usw}(X^*)$ for all matchings $X^*$. When $\alpha = 1$, we refer to $X$ as the USW-optimal matching. 
\end{definition}

It is well-known that maximal matchings (both divisible and indivisible) are a at least a $1/2$-approximation to the maximum. 
The proof of this fact is standard in the literature for maximum and maximal matchings, but is included in Appendix~\ref{sec:omitted} for full clarity.

\vspace{2mm}
\begin{proposition} \label{prop:nw-usw}
    Every non-wasteful matching is $\frac12$-USW.
\end{proposition}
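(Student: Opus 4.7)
The plan is to use the standard charging argument for maximal matchings, adapted to make the non-wastefulness definition do the work. Let $X$ be a non-wasteful matching and $X^*$ be a USW-optimal matching. The crucial observation, which follows immediately from the definition of non-wastefulness, is that for every edge $(a,o) \in E$, either the agent $a$ is saturated in $X$ or the item $o$ is fully assigned in $X$ (otherwise we would have an unsaturated agent liking an unassigned item, violating non-wastefulness).

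For the indivisible case I would carry out a direct two-to-one charging. For each edge $(a^*, o^*) \in X^*$, by the observation above, at least one endpoint is covered by some edge of $X$; define $\phi((a^*, o^*))$ to be the edge of $X$ incident to $a^*$ if $a^*$ is saturated, and otherwise the (unique) edge of $X$ incident to $o^*$. For any fixed edge $(a, o) \in X$, the preimage $\phi^{-1}((a,o))$ contains at most two edges of $X^*$: at most one edge of $X^*$ uses $a$ and at most one uses $o$, since $X^*$ is itself a matching. Therefore $|X^*| \leq 2|X|$, which rewrites as $\mathrm{usw}(X) \geq \tfrac{1}{2}\,\mathrm{usw}(X^*)$.

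For the divisible case I would give the analogous LP-duality argument. Set $u_a = \sum_{o} x_{a,o} \in [0,1]$ and $v_o = \sum_{a} x_{a,o} \in [0,1]$. The non-wastefulness condition says exactly that $u_a + v_o \geq 1$ for every $(a,o) \in E$, so $(u,v)$ is feasible for the dual of the bipartite matching LP whose primal optimum equals $\mathrm{usw}(X^*)$ (by integrality of the bipartite matching polytope, the LP and integer optima coincide). Weak duality then gives
\[
\mathrm{usw}(X^*) \;\leq\; \sum_{a} u_a + \sum_{o} v_o \;=\; 2\sum_{a,o} x_{a,o} \;=\; 2\,\mathrm{usw}(X),
\]
yielding the same $\tfrac{1}{2}$ bound.

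There is no real obstacle here; the result is classical and both arguments are short. The only point that requires a little care is ensuring the charging map $\phi$ is well-defined for every edge of $X^*$, which is precisely where non-wastefulness (rather than some weaker notion) is used.
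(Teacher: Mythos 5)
Your argument is correct and is essentially the paper's own proof: the paper also observes that non-wastefulness forces every edge $(a,o)\in E$ to have a saturated agent or a fully assigned item, and then performs the same two-to-one counting by summing $\sum_{a'}x_{a',o}+\sum_{o'}x_{a,o'}\ge 1$ over the edges of $X^*$ and using that $X^*$ is a matching. Your extra LP-duality phrasing for the divisible case is a harmless repackaging of the same inequality via weak duality, not a genuinely different route.
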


\subsection{Online Model} In the online setting, the items in $M$ arrive one-by-one in an arbitrary order. 
We refer to the step in which item $o \in M$ arrives as step $o$. 
Upon the arrival of item $o$, the incident edges, $(a,o)$, to the agents in $a \in N$ are revealed from $G$. 
At this point, the algorithm must make an irrevocable decision to match the item one of the agents in $N$ who is not currently saturated (ie. not already included in the matching).
We examine both deterministic (for analytic purposes) and randomized algorithms to construct these matchings. 

In the online setting we define the online fairness metrics using the standard notion of a competitive ratio as our approximation factor as follows.

\begin{definition}
    For $\alpha \in (0,1]$, a deterministic online algorithm for matching items is $\alpha$-CEF (resp., CEF1, CPROP, USW, or NW) if it produces an $\alpha$-CEF (resp., CEF1, CPROP, USW, or NW) matching after all items have arrived.
\end{definition}

For randomized algorithms, we require that all fairness constraints hold in expectation after all items have arrived.

\section{Randomized Algorithms} \label{sec:algo}
We here present our randomized algorithm for constructing a matching that achieves simultaneous guarantees for the CEF and CPROP fairness objectives, as well as non-wasteful efficiency. 
While wastefulness makes the fairness objectives somewhat trivial to obtain, our enforced non-wasteful condition showcases the complexity of maintaining a fair matching. 
We begin by analyzing the simple random algorithm and later use this procedure to validate the impossibility result on the $\alpha$-CEF guarantee of any algorithm.

% \subsection{The Random Algorithm} 
Our algorithm is a simple variant on a completely random matching procedure to ensure non-wasteful efficiency. 
Upon the arrival of an item $o$, it is revealed which classes $i \in [k]$ have a currently unmatched agent, $a$, that likes the item (ie. $(a,o) \in E$ and $x_a = 0$). 
Over this set of classes, we select one at random and then randomly assign the item to an unmatched agent that likes the item within this class. 
Though this nested randomization appears obvious, the proof of its near optimal fairness guarantees requires a nontrivial analysis. 
The following theorem establishes the approximate fairness and efficiency guarantees of our algorithm -- the pseudocode is provided in Algorithm~\ref{alg:random}.

\begin{algorithm}[b!]
\caption{\textsc{Random}}
\label{alg:random}
    \begin{algorithmic}[1]
    \FOR{$o \in M$}
        \STATE $S_o \leftarrow \emptyset$
        \FOR{$i \in [k]$}
            \IF{$\exists a \in N_i$ s.t. $(a,o) \in E$ and $x_a = 0$}
                \STATE $S_o \leftarrow S_o \cup \{i\}$ %\suho{no assignment}
            \ENDIF
        \ENDFOR
        \STATE Pick an $i \in S_o$ uniformly at random
        \STATE Pick an $a \in N_i$ with $(a,o) \in E$ and $x_a = 0$ uniformly at random
        \STATE Set $x_{a,o} = 1$
    \ENDFOR
    \end{algorithmic}
\end{algorithm}

\vspace{2mm}
\begin{customthm}{\ref{thm:rand-ind-lower}}[Formal]\label{thm:rand_alg}
    For randomized matching of indivisible items, Algorithm~\ref{alg:random} satisfies non-wastefulness, $\frac12$-CEF, $\frac12$-CPROP, and $\frac12$-USW
\end{customthm}
\begin{proof}[Proof Sketch]
    We here provide a sketch of the analysis needed to verify the result and defer the reader to Appendix~\ref{sec:omit-3} for the complete analysis.

    The non-wastefulness of the algorithm is direct from its definition: each arriving item is allocated to an unsaturated agent that likes the item. The USW approximate guarantee then follows from Proposition~\ref{prop:nw-usw}.

    For the CEF objective, we invoke a novel proof technique specific to the challenging analysis of optimistic valuations used throughout the fair matching objective framework. 
    Specifically, we show that for any two distinct classes $i, j$, the expected value \( \Ex{V_i(X)} \ge \frac{1}{2} \cdot \Ex{V_i^*(Y_j(X))} \) by  introducing dummy items to analyze the value of some augmented input set $A_i$, proving that \(V_i^*(A_i) \le 2 \cdot V_i(X) \).
    With this augmented set, we more readily obtain the desired approximate guarantees and demonstrate that the optimal solution on $A_i$ dominates the true solution on $Y_j(X)$.
    % \todo{math notation error}
    %
    Combining this fact with a lemma relating the expected values, we establish the $\frac{1}{2}$-CEF guarantee.
    We suspect this novel proof technique will be useful in follow up works that examine similar, nonadditive, solution concepts -- a key problem identified by the preliminary work of \cite{hosseini2022class}.

    Lastly, for the CPROP objective, the analysis modifies the Equal-Filling-OCS approach of \cite{hosseini2022class} by using a simpler independent rounding method. 
    More concretely, for an arriving item $o \in M$ we construct the vector $(x_{a,o})_{a \in N}$ where each entry is the corresponding probability of an agent being matched to the given item under the \textsc{Random} algorithm. 
    After all items have arrived, each agent $a$ is matched to an item with probability 
    \begin{align*}
        1 - \prod_{o \in M}(1 - x_{a,o}) \ge 1 - \exp{}\left(-\sum_{o \in M}x_{a,o}\right)
    \end{align*}
    and by integrating according to this density function over the agents in each class and comparing to the upper bound on the $\text{prop}_i$ value from \cite{hosseini2022class}, we obtain the desired approximation ratio.
\end{proof}

We here highlight that while the more sophisticated OCS rounding scheme and, as a result, the Equal-Filling-OCS algorithm yields a stronger 0.593-CPROP guarantee, the algorithm does not lend itself to analysis of the CEF objective (it remains an open problem to obtain any such approximation for the algorithm). 
Thus, although our algorithm is slightly weaker with respect the CPROP fairness definition, our simple algorithm further gives a $\frac12$-CEF approximation while maintaining non-wastefulness.

\section{Improved CEF Upper Bounds}\label{sec:cef-upper}
In this section, we provide improved CEF upper bounds of $0.761$ for any randomized indivisible and $0.677$ for any deterministic divisible algorithm.

\subsection{Indivisible setting} \label{sec:imposs}
Our upper bound for the indivisible setting showcases the near tightness of our algorithmic guarantees proven in Section~\ref{sec:algo}.\footnote{We note that, trivially, an upper bound of $1-1/e$ exists for the USW objective by the classic result of \cite{karp1990optimal}. This bound persists for CPROP by considering the problem instance where $k=1$. %We proceed to prove non-trivial bounds for CEF.
}

The seminal paper of ~\cite{karp1990optimal} showed that no online algorithm can get a competitive ratio better than $1 - 1/e$ for the USW objective by an “upper triangular graph” (the graph whose adjacency matrix is upper triangular) construction. In this graph, there are $n$ arriving items and $n$ agents. The first item to arrive has an edge to all $n$ agents, the second has an edge to $n - 1$ of the agents, the third to only $n - 2$ of them, and so on.

We will proceed with demonstrating that by an extension on the result of \cite{karp1990optimal}, we can upper bound the $\alpha$-CEF guarantee of any randomized algorithm. Specifically, we prove the following theorem:

\vspace{2mm}
\begin{customthm}{\ref{thm:rand-ind-upper}}[Formal]\label{thm:cef_upper}
    No randomized online algorithm for matching indivisible items can achieve an $\alpha$-CEF guarantee for any $\alpha > \left(\frac{e^2-1}{e^2+1}\right)$ and non-wastefulness.
\end{customthm}

Our construction for CEF impossibility extends the problem instance of~\cite{karp1990optimal} to include a second class which can be uniquely matched to each arriving item (See Figure~\ref{fig:cef_impossibility} for an illustration of this instance). We proceed to show that this instance admits at best a $\frac{e^2-1}{e^2+1}$ approximation to the CEF objective by first showing that the \textsc{Random} algorithm admits this approximation factor in expectation and subsequently showing that no algorithm can do better on the given instance, thus bounding the performance of any randomized algorithm in general.

\begin{figure}
\begin{subfigure}{.45\textwidth}
  \centering
  \includegraphics[width=0.8\linewidth]{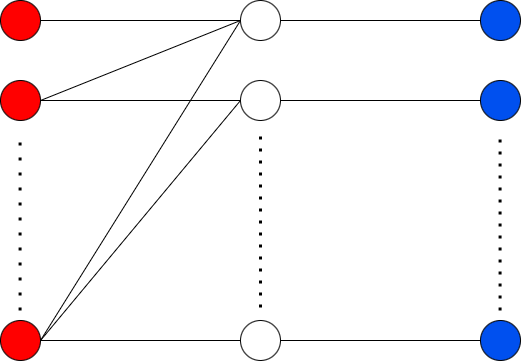}
  \caption{}
  \label{fig:cef_impossibility}
\end{subfigure}%
\hfill
\begin{subfigure}{.45\textwidth}
  \centering
  \includegraphics[width=0.8\linewidth]{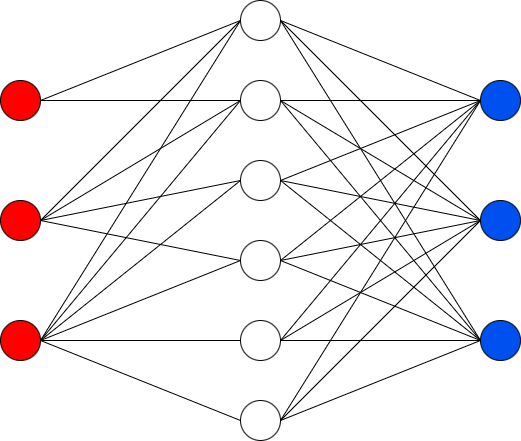}
  \caption{}
  \label{fig:0677_cef_graph}
\end{subfigure}
\caption{Impossibility constructions for the upper bound results of Theorems~\ref{thm:cef_upper} and~\ref{thm:divisible}. (a) the indivisible setting construction for an at most $\left(\frac{e^2-1}{e^2+1}\right)$-CEF approximation, (b) the divisible setting construction for an at most 0.677-CEF approximation.}
\label{fig:fig}
\end{figure}

Most crucial to the argument that bounds $\alpha$ is the computation of a ``stopping time'' after which no further items can be matched to Class 1 since all potential agents will have been previously saturated in a suboptimal manner -- a result of the ambiguity in matching from the upper triangular instance and randomization of our algorithm.
After this stopping time, by non-wastefulness, all items will be matched to the second class. 
This necessarily results in an unfair distribution of items and by deriving the stopping time as a fraction of the number of items, we obtain our upper bound approximation.

The full proof of this theorem is deferred to Appendix~\ref{sec:omit-4} due to space constraints.
\subsection{Divisible setting}\label{sec:divisible}
In this section, we improve upon the established bounds of \cite{hosseini2022class} and move closer towards resolving the open problem on what is the best achievable $\alpha$-CEF guarantee for non-wasteful divisible matchings through a novel input construction. 
Prior to this work, the best known upper bound bound was $\sfrac34$ with an algorithm that guarantees at least $1 - \sfrac{1}{e}$. Our improved bound of $0.677$ is thus nearly tight. Note that our CEF upper bound is subject to non-wastefulness because an algorithm can trivially achieve fairness on its own by throwing away every item.

\begin{theorem}
	No non-wasteful deterministic algorithm for matching divisible items can achieve $\beta$-CEF for any $\beta > 0.677$.
\end{theorem}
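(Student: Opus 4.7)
The plan is to exhibit the adversarial divisible-matching instance depicted in Figure~\ref{fig:0677_cef_graph} and show that on this instance any deterministic non-wasteful algorithm forces at least one class to envy the other by a factor strictly greater than $1-0.677$. Since the algorithm is deterministic and divisible, I expect that by the within-class symmetry of the construction one can reduce the algorithm's entire behavior to a scalar schedule: at each arrival time $t$, let $f(t) \in [0,1]$ denote the fraction of item $t$ routed to Class~$1$, with the remainder going to Class~$2$ (non-wastefulness forces each item to be fully consumed whenever a match is available). Within each class, a water-filling argument lets me assume, without loss of generality, that the class-level mass is spread uniformly across currently-eligible agents.

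First I would formalize the construction. Following the philosophy of the upper-triangular instance of \cite{karp1990optimal}, the instance consists of $n$ items arriving in sequence together with agents partitioned into two classes $N_1, N_2$, where the adjacency pattern shrinks at different rates for the two classes. The rates are selected so that each arriving item is adjacent to at least one unsaturated agent in \emph{both} classes (guaranteeing that non-wastefulness binds and forces a nontrivial split) yet the ``reachable'' optimistic valuations from each side are asymmetric enough that the algorithm cannot simultaneously satisfy both envy constraints. I expect the discrete instance to be parameterised by a small integer, with the $0.677$ threshold emerging in the $n \to \infty$ limit.

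Second, I would pass to that continuous limit and express the four relevant quantities $V_1(X), V_2(X), V_1^*(Y_2(X)), V_2^*(Y_1(X))$ as integrals against $f$. The values $V_i(X)$ are immediate from the within-class water-filling. The delicate terms are the optimistic valuations $V_i^*(Y_j(X))$: each requires solving a maximum bipartite matching between the agents of class $i$ and the fractional mass routed to class $j$, and this is where the subadditivity noted in Section~\ref{sec:model} bites. I plan to compute $V_i^*$ exactly by exhibiting an explicit saturating matching on the limiting construction, giving clean closed-form expressions in terms of $f$.

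Third, I would impose the two CEF inequalities $V_i(X) \ge \beta \cdot V_i^*(Y_j(X))$ for $(i,j) \in \{(1,2),(2,1)\}$ and show that the resulting feasibility problem in $f$ admits no solution for $\beta > 0.677$. The binding optimum occurs when both inequalities are tight, which pins down $f$ as the solution of a first-order ODE whose boundary conditions yield a transcendental equation for $\beta$; its unique solution in $(1-1/e, 1)$ is approximately $0.677$. The main obstacle, and the reason the bound improves on the previous $3/4$, is precisely the non-additive control of $V_i^*$: unlike in classical online matching analyses where an additive potential suffices, here the adversary must simultaneously tighten two envy constraints whose right-hand sides are themselves maximum-matching quantities that depend globally on the algorithm's schedule. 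Once an explicit lower bound on $V_i^*$ in terms of $f$ is in place, the remaining variational problem is a clean one-variable calculation that delivers the claimed threshold.
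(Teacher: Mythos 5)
Your proposal follows essentially the same route as the paper: the same two-class instance of Figure~\ref{fig:0677_cef_graph}, uniform spreading within Class~1 (the paper's Lemma~\ref{equally}), reduction to a per-item split that makes both envy constraints simultaneously tight (the paper pins this down as the fixed ratio $\frac{1+\alpha}{2}$ vs.\ $\frac{1-\alpha}{2}$ via an adversary-flipping induction in Lemma~\ref{alpha-beta}, rather than your variational/ODE framing, but the content is the same), and the same harmonic-sum/logarithmic limit yielding the transcendental equation $\beta \le \frac{2}{1+\beta}\bigl(1 - e^{-(1+\beta)/2}\bigr)$ with root near $0.677$. The only substantive simplification you are missing is that on this instance the optimistic valuations are not delicate at all---Class~2 is complete, so $V_2^*(Y_1(X)) = V_1(X)$ and $V_1^*(Y_2(X))$ is essentially $n$---which is what collapses the computation to a one-variable calculation.
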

\begin{proof}[Proof Sketch]
    Due to space constraints, the full proof is found in Appendix~\ref{sec:omit-4}.
    We here briefly give the intuition for the impossibility result.
    
    We construct an adversarial instance for which we cannot achieve $\beta$-CEF for $\beta > 0.677$.
    Consider two classes, each comprised of $n$ agents, and an input stream of $2n$ items.
    For the first class, items numbered $1$ and $2$ are connected to all the agents. 
    And for each $i$, items numbered $2i+1$ and $2i+2$ are connected to all the agents that items $2i-1$ and $2i$ are, except one arbitrary agent from the class.
    Therefore, items $2i-1$ and $2i$ are connected to $n-i+1$ agents for each $i$. 
    For the second class, we simply have a complete graph wherein each agent of the class has an edge connecting her to each item. 
    The full construction for $n=3$ case is depicted in Figure~\ref{fig:0677_cef_graph} for clarity. 

   To verify the result for the given instance, assume there is an algorithm that guarantees $\alpha$-CEF. 
    The proof then follows from the synthesis of two key facts: (i) any $\alpha$-CEF algorithm should distribute items equally among agents within Class 1 for this input instance to maximize their saturation (Lemma~\ref{equally}), and (ii) that any such algorithm must further divide arriving items between the two classes such that Class 1 receives $\frac{1 + \alpha}{2}$ and Class 2 receives $\frac{1 - \alpha}{2}$. 
    This ratio is optimal against an adversarial input, ensuring neither class is overly envious (Lemma~\ref{alpha-beta}).

    The first result is proven by the nature of a non-wasteful online algorithm, and the second is achieved by induction: assume the $\alpha$-CEF guarantee holds up to step $t-1$. 
    For step $t$, if the item is not allocated according to the prescribed ratio, an adversary can force a violation of the $\alpha$-CEF guarantee. 
    Thus, maintaining the ratio ensures the guarantee persists.

    The theorem is finally proven by bounding the matching size for each class given the two above facts. 
    We can ultimately determine the maximum value of $\alpha$ that maintains the $\alpha$-CEF guarantee, concluding that $\alpha \le 0.677$.
\end{proof}
% \section{Divisible Matching Bounds} \label{sec:divisible}

\section{Price of Fairness} \label{sec:pof}
Stemming from the highly influential work of \cite{karp1990optimal}, it is well-known that no algorithm for the online matching problem can achieve an approximation better than $1-1/e$ to the maximal matching objective (USW in our context). Moreover, by the result of Theorem~\ref{thm:cef_upper}, we demonstrate that a comparable approximation bound persists for the CEF objective. It is, thus, only natural to explore the following question: \emph{is it possible to achieve both an optimal CEF and USW approximation simultaneously?}

We here address this question with an impossibility result that provides an initial trade-off between the fairness of a solution and its optimality with respect to the (unfair) USW objective -- a relationship we refer to as the \emph{price of fairness} for online matching. More formally, our result is as follows.

\begin{theorem}\label{thm:price_fair}
    For any $\eps>0$, there exists a problem instance such that no (possibly randomized) non-wasteful online algorithm with an $\alpha$-CEF guarantee can achieve an approximation to the USW objective greater than $\frac{1}{1+\alpha} + \eps$.
\end{theorem}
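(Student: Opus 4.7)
The plan is to construct an adversarial problem instance that tightly couples the $\alpha$-CEF requirement to a loss in utilitarian welfare. Roughly, I would design a two-class instance whose item stream is split into two phases: a first phase of items matchable to agents in both classes---which forces the algorithm to commit to a division of items in an $\alpha$-dependent ratio in expectation---and a second phase of items matchable to only one class, whose class is chosen adversarially to punish the saturation pattern produced by the first phase. The class sizes and the numbers of items in each phase would be calibrated to $\alpha$ and $\eps$ so that the forced split sacrifices roughly an $\frac{\alpha}{1+\alpha}$ fraction of the maximum matching as $n \to \infty$.

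First, I would fix $\eps > 0$ and a sufficiently large parameter $n$, then specify the instance explicitly, for instance with two classes of size $n$, a first phase of $n$ items matchable to both, and a second phase of $n$ items matchable only to one adversarially chosen class. Second, I would invoke the $\alpha$-CEF requirement in expectation: letting $A_1$ denote the (random) number of phase-1 items matched to class $1$, the inequalities $\Ex{V_i(X)} \ge \alpha \Ex{V_i^*(Y_j(X))}$ for both $(i,j) = (1,2)$ and $(i,j) = (2,1)$ force $\Ex{A_1}$ into a narrow $\alpha$-dependent interval. Third, combining this forced split with non-wastefulness and the adversary's worst-case phase-2 choice, I would upper bound the expected USW by the phase-1 matches plus the surviving unsaturated capacity in the phase-2 class, and then compare against the hindsight-optimal matching that routes all phase-1 items to the phase-2 class's complement, obtaining the $\frac{1}{1+\alpha} + \eps$ ratio as $n$ grows.

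The hard part will be making this bound tight: a purely symmetric two-phase construction of the kind above appears to only yield the weaker ratio $\frac{2+\alpha}{2(1+\alpha)}$, so producing the claimed $\frac{1}{1+\alpha}$ will likely require either an asymmetric class structure, a sequence of interleaved adversarial rounds that compound the CEF restriction across many decisions, or a Yao-minimax-style distribution over inputs that reduces the randomized algorithm to a deterministic one on a random instance. A secondary technical issue is translating the per-realization matching structure (where non-wastefulness is a hard constraint) into the expectation-based CEF and USW inequalities, so that the upper bound cleanly covers every randomized non-wasteful online algorithm rather than only the deterministic case.
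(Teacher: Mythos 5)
Your high-level framework (a two-phase adversarial stream in which universally-matchable items force an $\alpha$-dependent split, followed by class-specific items that collide with already-saturated agents) is exactly the skeleton of the paper's argument, and you are right that the symmetric two-class version of it falls short of $\frac{1}{1+\alpha}$. But that is precisely where your proposal stops: you list three candidate ways the gap might be closed without committing to any of them, so the proof of the stated bound is missing. The idea you need is the paper's \emph{many-class} construction: take $k-1$ classes of $q$ agents each plus one large class of $q(k-1)$ agents, let $p/q \approx \alpha$, and send $p(k-1)+q$ universal items in phase one followed by $k-1$ groups of $q$ items, the $i$-th group adjacent only to class $i$. The $\alpha$-CEF constraint against the large class forces \emph{each} of the $k-1$ small classes to absorb about $p$ universal items in expectation (otherwise some small class $i$ has $\Ex{c_i(\tau)} < p < \alpha q \le \alpha\,\Ex{V_i^*(Y_k(X))}$), so $p(k-1)$ items are diverted away from the large class; every one of those diverted items then blocks a dedicated phase-two item, capping the expected USW at $qk$, whereas the hindsight optimum routes all phase-one items to the large class and attains $p(k-1)+qk$. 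The ratio $\frac{qk}{qk+p(k-1)} = \frac{1}{1+\alpha(k-1)/k}$ only reaches $\frac{1}{1+\alpha}$ in the limit $k \to \infty$, which is why no fixed-$k$ (in particular $k=2$) variant suffices and why the theorem is stated with an additive $\eps$. Your secondary concern about passing from per-realization non-wastefulness to expectation-level CEF and USW inequalities is legitimate but routine here: the counting identity for $\usw(X)$ holds pointwise on every realization, so linearity of expectation combined with the expected-allocation claim above is all that is needed.
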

\begin{proof}[Proof Sketch]
    The proof proceeds by considering an instance with $k-1$ classes of $q$ agents, as well as a $k$-th class comprised of $q(k-1)$ agents.
    The adversarial input stream consists of two phases. In the first phase, $p(k-1) + q$ items arrive, each of which has incident edges to every agent in the graph, whereas in the second phase, $k-1$ groups of items arriving sequentially where the $i$-th such group consists of $q$ items with edges to all the agents in class $i$.
    
    In this instance, we first show that any non-wasteful $\alpha$-CEF algorithm should allocate at least $p(k-1)$ items to the class $1,2,\ldots, k-1$.
    Then, since these items cannot be further matched to class $1,\ldots, k-1$ in the second phase, we can upper bound the utilitarian social welfare at the end of the second phase by $p(k-1) + qk - p(k-1) = qk$.
    Observing that the offline optimal solution in hindsight allocates all items in the first phase to class $k$, while allocating the remaining $q$ items specific to each class to their corresponding class in the second phase, the optimal USW is $p(k-1) + q + q(k-1)$.
    The proof follows from selecting proper parameters for $p$ and $q$.
\end{proof}

With this novel price of fairness result, we observe that if $\alpha > \frac{1}{e-1} \approx 0.582$, then our result for $\alpha$-CEF implies a USW guarantee that is strictly smaller than $1-\frac{1}{e}$, the well known bound by~\cite{karp1990optimal}.
Thus, USW must be sacrificed to achieve fairness.
Further, if there exists any algorithm that achieves the $0.761$-CEF guaranteed by Theorem~\ref{thm:rand-ind-upper}, it would necessarily have USW smaller than $0.568$ (larger than $0.5$ by maximality), which is considerably smaller than the $1-\frac{1}{e} \approx 0.632$ by~\cite{karp1990optimal}.
% Overall, it is apparent that one cannot hope to achieve simultaneous optimal bounds and the price of fairness.

% \noindent With this novel price of fairness result, we can see that for any CEF approximation less than the obtained bound of $0.76$ (Theorem~\ref{thm:cef_upper}), we must further have that our USW is upper bounded by $0.568$ which is strictly less than the theoretical bound of \cite{karp1990optimal}. Thus, we cannot hope to achieve simultaneous optimal bounds and the price of fairness is apparent.
\eat{
% Removing these remarks for now but not deleting for future discussion!
\begin{remark}
    Note that the trade-off becomes meaningful, i.e, USW approximation factors become smaller than $1-1/e$, only if $\alpha \ge 1/(e-1) \approx 0.5820$.
    For example, plugging $\alpha = 1/2$ holds bound of $2/3$ which is already implied by STH.
    \suho{a figure on this trade-off may help?}
    % Furthermore, this implies that Ranking has at most $1/(e-1)$-CEF. \ms{does this implication work in the reverse direction?}
    % \suho{you're right. I was confused}
\end{remark}

\begin{remark}
    One may wonder what happens if each $i$-th bulk in the second phase consists of $r \le q$ items.
    In this case, the loss of the algorithm becomes a random variable depending on how the algorithm allocates the item in the first phase.
    Followed by some elementary probability arguments, one can obtain that the case in which $r = q$ constitutes the worst-possible instance for the algorithm.
\end{remark}

\suho{Further question might be whether we can obtain a parameterized algorithm that achieves this tight trade-off?}

\ms{It may also be interesting to frame the result as the following: by the result of Theorem 3.7 above, we know that no randomized algorithm can achieve a CEF guarantee better than $1-1/e$. Thus, the best achievable USW approximation in conjuction would be
\begin{align*}
    \frac{1}{1 + (1-1/e))} < 1-1/e
\end{align*}
which implies that a maximally fair approximate algorithm can also maintain the optimal approximation to USW. We conjecture such an algorithm exists via modification of the Ranking algorithm... this is interesting because its consisent with the prior literature but then it also suggests that there is no price of fairness (and we have to leave open the conjecture unless we can prove a 1-1/e-CEF guarantee)}

\suho{I agree that we should argue this, and will be good if we can somehow approximately achieve this trade-off (although it seems not easy)}
}

\section{Discussion \& Open Problems} \label{sec:conclusion}
Our work closes the long-standing open conjecture on whether a non-wasteful randomized algorithm can achieve non-trivial fairness guarantees in the context of online matching problems. 
By conducting a detailed and non-trivial analysis of a natural randomized matching procedure, we have successfully developed an algorithm that not only complements our previously established $0.76$-CEF upper bound construction but also aligns with the known upper bounds for the CPROP and USW efficiency guarantees.

Moreover, we demonstrate that the algorithmic guarantee of \cite{hosseini2022class} for deterministic and divisible matchings is almost tight, with a novel input construction that exhibits a $0.67$-CEF upper bound.
We lastly define ``price of fairness'' for the online matching problem and present an interesting impossibility result on the trade-off between fair and optimal solutions. 
Namely, we demonstrate that any approximation to the CEF objective follows an inverse proportionality relationship to the possible USW approximation any algorithm can obtain. 
This demonstrates that we must allow some degradation in solution quality to ensure equitable treatment of classes. 

Future work should address the natural gaps between the upper and lower bounds discussed above. 
Perhaps most importantly, can a randomized algorithm achieve a USW approximation better than $\frac12$ while maintaining the given fairness guarantees? 
Is the price of fairness result tight, ie. does an algorithm exist that ensures $\alpha$-CEF while simultaneously guaranteeing $\frac{1}{1+\alpha}$-USW? 
We believe that some of these answers may result from a careful extension on the \textsc{Ranking} algorithm that will naturally rely on some priority ordering over both classes and agents.

\bibliographystyle{abbrvnat}
\bibliography{online_match}

\begin{thebibliography}{42}
\providecommand{\natexlab}[1]{#1}
\providecommand{\url}[1]{\texttt{#1}}
\expandafter\ifx\csname urlstyle\endcsname\relax
  \providecommand{\doi}[1]{doi: #1}\else
  \providecommand{\doi}{doi: \begingroup \urlstyle{rm}\Url}\fi

\bibitem[Aleksandrov et~al.(2015)Aleksandrov, Aziz, Gaspers, and Walsh]{aleksandrov2015online}
M.~Aleksandrov, H.~Aziz, S.~Gaspers, and T.~Walsh.
\newblock Online fair division: Analysing a food bank problem.
\newblock \emph{arXiv preprint arXiv:1502.07571}, 2015.

\bibitem[Amanatidis et~al.(2023)Amanatidis, Aziz, Birmpas, Filos-Ratsikas, Li, Moulin, Voudouris, and Wu]{amanatidis2023fair}
G.~Amanatidis, H.~Aziz, G.~Birmpas, A.~Filos-Ratsikas, B.~Li, H.~Moulin, A.~A. Voudouris, and X.~Wu.
\newblock Fair division of indivisible goods: Recent progress and open questions.
\newblock \emph{Artificial Intelligence}, page 103965, 2023.

\bibitem[Apostol(1999)]{apostol1999elementary}
T.~M. Apostol.
\newblock An elementary view of euler's summation formula.
\newblock \emph{The American Mathematical Monthly}, 106\penalty0 (5):\penalty0 409--418, 1999.

\bibitem[Azar and Richter(2004)]{azar2004zero}
Y.~Azar and Y.~Richter.
\newblock The zero-one principle for switching networks.
\newblock In \emph{Proceedings of the thirty-sixth annual ACM symposium on Theory of computing}, pages 64--71, 2004.

\bibitem[Banerjee and Johari(2019)]{banerjee2019ride}
S.~Banerjee and R.~Johari.
\newblock Ride sharing.
\newblock \emph{Sharing Economy: Making Supply Meet Demand}, pages 73--97, 2019.

\bibitem[Barman et~al.(2018)Barman, Krishnamurthy, and Vaish]{barman2018finding}
S.~Barman, S.~K. Krishnamurthy, and R.~Vaish.
\newblock Finding fair and efficient allocations.
\newblock In \emph{Proceedings of the 2018 ACM Conference on Economics and Computation}, pages 557--574, 2018.

\bibitem[Bei et~al.(2021)Bei, Lu, Manurangsi, and Suksompong]{bei2021price}
X.~Bei, X.~Lu, P.~Manurangsi, and W.~Suksompong.
\newblock The price of fairness for indivisible goods.
\newblock \emph{Theory of Computing Systems}, 65:\penalty0 1069--1093, 2021.

\bibitem[Benade et~al.(2018)Benade, Kazachkov, Procaccia, and Psomas]{benade2018make}
G.~Benade, A.~M. Kazachkov, A.~D. Procaccia, and C.-A. Psomas.
\newblock How to make envy vanish over time.
\newblock In \emph{Proceedings of the 2018 ACM Conference on Economics and Computation}, pages 593--610, 2018.

\bibitem[Bertsimas et~al.(2011)Bertsimas, Farias, and Trichakis]{bertsimas2011price}
D.~Bertsimas, V.~F. Farias, and N.~Trichakis.
\newblock The price of fairness.
\newblock \emph{Operations research}, 59\penalty0 (1):\penalty0 17--31, 2011.

\bibitem[Budish(2011)]{budish2011combinatorial}
E.~Budish.
\newblock The combinatorial assignment problem: Approximate competitive equilibrium from equal incomes.
\newblock \emph{Journal of Political Economy}, 119\penalty0 (6):\penalty0 1061--1103, 2011.

\bibitem[Caragiannis et~al.(2012)Caragiannis, Kaklamanis, Kanellopoulos, and Kyropoulou]{caragiannis2012efficiency}
I.~Caragiannis, C.~Kaklamanis, P.~Kanellopoulos, and M.~Kyropoulou.
\newblock The efficiency of fair division.
\newblock \emph{Theory of Computing Systems}, 50:\penalty0 589--610, 2012.

\bibitem[Caragiannis et~al.(2019)Caragiannis, Kurokawa, Moulin, Procaccia, Shah, and Wang]{caragiannis2019unreasonable}
I.~Caragiannis, D.~Kurokawa, H.~Moulin, A.~D. Procaccia, N.~Shah, and J.~Wang.
\newblock The unreasonable fairness of maximum nash welfare.
\newblock \emph{ACM Transactions on Economics and Computation (TEAC)}, 7\penalty0 (3):\penalty0 1--32, 2019.

\bibitem[Dickerson et~al.(2021)Dickerson, Sankararaman, Srinivasan, and Xu]{dickerson2021allocation}
J.~P. Dickerson, K.~A. Sankararaman, A.~Srinivasan, and P.~Xu.
\newblock Allocation problems in ride-sharing platforms: Online matching with offline reusable resources.
\newblock \emph{ACM Transactions on Economics and Computation (TEAC)}, 9\penalty0 (3):\penalty0 1--17, 2021.

\bibitem[Eisenberg and Gale(1959)]{eisenberg1959consensus}
E.~Eisenberg and D.~Gale.
\newblock Consensus of subjective probabilities: The pari-mutuel method.
\newblock \emph{The Annals of Mathematical Statistics}, 30\penalty0 (1):\penalty0 165--168, 1959.

\bibitem[Feldman et~al.(2009)Feldman, Korula, Mirrokni, Muthukrishnan, and P{\'a}l]{feldman2009online}
J.~Feldman, N.~Korula, V.~Mirrokni, S.~Muthukrishnan, and M.~P{\'a}l.
\newblock Online ad assignment with free disposal.
\newblock In \emph{International workshop on internet and network economics}, pages 374--385. Springer, 2009.

\bibitem[Foley(1966)]{foley1966resource}
D.~K. Foley.
\newblock \emph{Resource allocation and the public sector}.
\newblock Yale University, 1966.

\bibitem[Garg and Taki(2020)]{garg2020improved}
J.~Garg and S.~Taki.
\newblock An improved approximation algorithm for maximin shares.
\newblock In \emph{Proceedings of the 21st ACM Conference on Economics and Computation}, pages 379--380, 2020.

\bibitem[Ghodsi et~al.(2018)Ghodsi, HajiAghayi, Seddighin, Seddighin, and Yami]{ghodsi2018fair}
M.~Ghodsi, M.~HajiAghayi, M.~Seddighin, S.~Seddighin, and H.~Yami.
\newblock Fair allocation of indivisible goods: Improvements and generalizations.
\newblock In \emph{Proceedings of the 2018 ACM Conference on Economics and Computation}, pages 539--556, 2018.

\bibitem[Gorokh et~al.(2020)Gorokh, Banerjee, Jin, and Gkatzelis]{gorokh2020online}
A.~Gorokh, S.~Banerjee, B.~Jin, and V.~Gkatzelis.
\newblock Online nash social welfare via promised utilities.
\newblock \emph{arXiv e-prints}, pages arXiv--2008, 2020.

\bibitem[Hosseini and Searns(2021)]{hosseini2021guaranteeing}
H.~Hosseini and A.~Searns.
\newblock Guaranteeing maximin shares: Some agents left behind.
\newblock \emph{arXiv preprint arXiv:2105.09383}, 2021.

\bibitem[Hosseini et~al.(2022{\natexlab{a}})Hosseini, Huang, Igarashi, and Shah]{hosseini2022class}
H.~Hosseini, Z.~Huang, A.~Igarashi, and N.~Shah.
\newblock Class fairness in online matching.
\newblock \emph{arXiv preprint arXiv:2203.03751}, 2022{\natexlab{a}}.

\bibitem[Hosseini et~al.(2022{\natexlab{b}})Hosseini, Searns, and Segal-Halevi]{hosseini2022ordinal}
H.~Hosseini, A.~Searns, and E.~Segal-Halevi.
\newblock Ordinal maximin share approximation for goods.
\newblock \emph{Journal of Artificial Intelligence Research}, 74:\penalty0 353--391, 2022{\natexlab{b}}.

\bibitem[Huang and Shu(2021)]{huang2021online}
Z.~Huang and X.~Shu.
\newblock Online stochastic matching, poisson arrivals, and the natural linear program.
\newblock In \emph{Proceedings of the 53rd Annual ACM SIGACT Symposium on Theory of Computing}, pages 682--693, 2021.

\bibitem[Huang et~al.(2019)Huang, Tang, Wu, and Zhang]{huang2019online}
Z.~Huang, Z.~G. Tang, X.~Wu, and Y.~Zhang.
\newblock Online vertex-weighted bipartite matching: Beating 1-1/e with random arrivals.
\newblock \emph{ACM Transactions on Algorithms (TALG)}, 15\penalty0 (3):\penalty0 1--15, 2019.

\bibitem[Jin and Williamson(2021)]{jin2021improved}
B.~Jin and D.~P. Williamson.
\newblock Improved analysis of ranking for online vertex-weighted bipartite matching in the random order model.
\newblock In \emph{International Conference on Web and Internet Economics}, pages 207--225. Springer, 2021.

\bibitem[Kalyanasundaram and Pruhs(2000)]{kalyanasundaram2000optimal}
B.~Kalyanasundaram and K.~R. Pruhs.
\newblock An optimal deterministic algorithm for online b-matching.
\newblock \emph{Theoretical Computer Science}, 233\penalty0 (1-2):\penalty0 319--325, 2000.

\bibitem[Karande et~al.(2011)Karande, Mehta, and Tripathi]{karande2011online}
C.~Karande, A.~Mehta, and P.~Tripathi.
\newblock Online bipartite matching with unknown distributions.
\newblock In \emph{Proceedings of the forty-third annual ACM symposium on Theory of computing}, pages 587--596, 2011.

\bibitem[Karp et~al.(1990)Karp, Vazirani, and Vazirani]{karp1990optimal}
R.~M. Karp, U.~V. Vazirani, and V.~V. Vazirani.
\newblock An optimal algorithm for on-line bipartite matching.
\newblock In \emph{Proceedings of the twenty-second annual ACM symposium on Theory of computing}, pages 352--358, 1990.

\bibitem[Kurtz(1970)]{kurtz1970solutions}
T.~G. Kurtz.
\newblock Solutions of ordinary differential equations as limits of pure jump markov processes.
\newblock \emph{Journal of applied Probability}, 7\penalty0 (1):\penalty0 49--58, 1970.

\bibitem[Lipton et~al.(2004)Lipton, Markakis, Mossel, and Saberi]{lipton2004approximately}
R.~J. Lipton, E.~Markakis, E.~Mossel, and A.~Saberi.
\newblock On approximately fair allocations of indivisible goods.
\newblock In \emph{Proceedings of the 5th ACM Conference on Electronic Commerce}, pages 125--131, 2004.

\bibitem[Mahdian and Yan(2011)]{mahdian2011online}
M.~Mahdian and Q.~Yan.
\newblock Online bipartite matching with random arrivals: an approach based on strongly factor-revealing lps.
\newblock In \emph{Proceedings of the forty-third annual ACM symposium on Theory of computing}, pages 597--606, 2011.

\bibitem[Mehta et~al.(2007)Mehta, Saberi, Vazirani, and Vazirani]{mehta2007adwords}
A.~Mehta, A.~Saberi, U.~Vazirani, and V.~Vazirani.
\newblock Adwords and generalized online matching.
\newblock \emph{Journal of the ACM (JACM)}, 54\penalty0 (5):\penalty0 22--es, 2007.

\bibitem[Mehta et~al.(2013)]{mehta2013online}
A.~Mehta et~al.
\newblock Online matching and ad allocation.
\newblock \emph{Foundations and Trends{\textregistered} in Theoretical Computer Science}, 8\penalty0 (4):\penalty0 265--368, 2013.

\bibitem[Mertzanidis et~al.(2024)Mertzanidis, Psomas, and Verma]{mertzanidis2024automating}
M.~Mertzanidis, A.~Psomas, and P.~Verma.
\newblock Automating food drop: The power of two choices for dynamic and fair food allocation.
\newblock \emph{arXiv preprint arXiv:2406.06363}, 2024.

\bibitem[Moulin(2019)]{moulin2019fair}
H.~Moulin.
\newblock Fair division in the internet age.
\newblock \emph{Annual Review of Economics}, 11\penalty0 (1):\penalty0 407--441, 2019.

\bibitem[Procaccia and Wang(2014)]{procaccia2014fair}
A.~D. Procaccia and J.~Wang.
\newblock Fair enough: Guaranteeing approximate maximin shares.
\newblock In \emph{Proceedings of the fifteenth ACM conference on Economics and computation}, pages 675--692, 2014.

\bibitem[Steinhaus(1948)]{steinhaus1948problem}
H.~Steinhaus.
\newblock The problem of fair division.
\newblock \emph{Econometrica}, 16:\penalty0 101--104, 1948.

\bibitem[Varian(1973)]{varian1973equity}
H.~R. Varian.
\newblock Equity, envy, and efficiency.
\newblock 1973.

\bibitem[Walsh(2011)]{walsh2011online}
T.~Walsh.
\newblock Online cake cutting.
\newblock In \emph{Algorithmic Decision Theory: Second International Conference, ADT 2011, Piscataway, NJ, USA, October 26-28, 2011. Proceedings 2}, pages 292--305. Springer, 2011.

\bibitem[Yokoyama and Igarashi(2023)]{yokoyamaasymptotic}
T.~Yokoyama and A.~Igarashi.
\newblock Asymptotic existence of class envy-free matchings.
\newblock 2023.

\bibitem[Yuen and Suksompong(2023)]{yuen2023extending}
S.~M. Yuen and W.~Suksompong.
\newblock Extending the characterization of maximum nash welfare.
\newblock \emph{Economics Letters}, 224:\penalty0 111030, 2023.

\bibitem[Zeng and Psomas(2020)]{zeng2020fairness}
D.~Zeng and A.~Psomas.
\newblock Fairness-efficiency tradeoffs in dynamic fair division.
\newblock In \emph{Proceedings of the 21st ACM Conference on Economics and Computation}, pages 911--912, 2020.

\end{thebibliography}

\newpage
\appendix
% \section{Appendix} \label{sec:appendix}
\section{Fairness and Efficiency in Online Class Matching}
\label{sec:other}
We here extend the classical notion of Nash social welfare (NSW) to the class fair matching problem setting and demonstrate its intersection with the notion of CEF1. Concretely, we demonstrate that the former implies the latter, but a reverse implication is not guaranteed.

\begin{definition}[Class Nash Social Welfare]
    A class Nash social welfare (\emph{CNSW}) of a matching $X$ is defined by
    \begin{align*}
        \textsc{CNSW(X)} = \left( \prod_{i=1}^k V_i(X) \right)^{1/k}.
    \end{align*}
    We say that a matching is \emph{CMNW} if it maximizes \emph{CNSW}.
\end{definition}

The NSW has been typically viewed as a balance between fair and efficient allocations of items to a set of agets. Most recently, \cite{yuen2023extending} showed the NSW objective is the only welfarist function of agent valuations whose maximization leads to allocations that are EF1 and PO. In spite of its tremendous value, it is also widely known that maximizing the objective is generally hard even to approximate within polynomial time without strong problem assumptions \cite{barman2018finding}. We here demonstrate that, given such a maximizing matching for the CNSW objective, we further obtain the CEF1 guarantee in line with the fair division literature. For a more complete discussion on the NSW objectives role in fair division, we defer the reader to \cite{amanatidis2023fair}.

\begin{theorem}
    A \emph{CMNW} matching satisfies non-wastefulness and is \emph{CEF1}.
\end{theorem}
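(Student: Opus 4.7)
The plan is to follow the classical Nash-welfare-implies-EF1 template as in Caragiannis et al.~\cite{caragiannis2019unreasonable}: prove both non-wastefulness and CEF1 by contradiction, in each case exhibiting a local edit of $X$ that strictly increases $\textsc{CNSW}$. The essential novelty is that the class valuation $V_i^*$, being the rank function of a transversal matroid, is only monotone and submodular rather than additive; I will accordingly use submodularity in place of the usual sum identities. Throughout I adopt the standard fair-division tie-breaking convention for CMNW: maximize the support size $|\{l : V_l(X) > 0\}|$ first, then maximize the product of positive class values. Non-wastefulness is immediate under this convention: if some $a \in N_i$ with $(a,o) \in E$ is unsaturated and $o$ is unassigned, then $X' = X \cup \{(a,o)\}$ raises $V_i$ by one and leaves other classes unchanged, so either the product of positive values grows (if $V_i(X) > 0$) or the support grows (if $V_i(X) = 0$), contradicting maximality of $X$.

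For CEF1, I suppose toward a contradiction that classes $i, j$ witness $Y_j(X) \ne \emptyset$ and $V_i(X) < V_i^*(Y_j(X) \setminus \{o\})$ for every $o \in Y_j(X)$. The goal is to locate a single item $o^* \in Y_j(X)$ whose transfer from class $j$ to class $i$ strictly improves $\textsc{CNSW}$. I first claim there exists $o^*$ with $V_i^*(Y_i(X) \cup \{o^*\}) = V_i(X) + 1$. Otherwise every $o \in Y_j(X)$ satisfies $V_i^*(Y_i(X) \cup \{o\}) = V_i(X)$, and a telescoping submodularity chain over the items of $Y_j(X)$ yields $V_i^*(Y_i(X) \cup Y_j(X)) \le V_i(X)$. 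But monotonicity together with the non-CEF1 hypothesis gives $V_i^*(Y_i(X) \cup Y_j(X)) \ge V_i^*(Y_j(X) \setminus \{o\}) \ge V_i(X) + 1$, a contradiction. I also use the identity $V_i^*(Y_i(X)) = V_i(X)$, which holds because $X$ restricted to class $i$ already witnesses a matching of size $V_i(X)$ between $N_i$ and $Y_i(X)$ while $V_i^*(Y_i(X)) \le |Y_i(X)| = V_i(X)$.

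Given such an $o^*$, define $X'$ by (i) replacing the class-$i$ assignment with a maximum matching between $N_i$ and $Y_i(X) \cup \{o^*\}$ of size $V_i(X) + 1$, (ii) removing the edge incident to $o^*$ from the class-$j$ matching, and (iii) leaving every other class untouched. The result is a valid matching with $V_i(X') = V_i(X) + 1$, $V_j(X') = V_j(X) - 1$, and $V_l(X') = V_l(X)$ for $l \notin \{i,j\}$. The non-CEF1 hypothesis combined with $V_i^*(Y_j(X) \setminus \{o^*\}) \le |Y_j(X)| - 1 = V_j(X) - 1$ forces $V_j(X) \ge V_i(X) + 2$, whence $(V_i(X) + 1)(V_j(X) - 1) > V_i(X) V_j(X)$, so $\textsc{CNSW}(X') > \textsc{CNSW}(X)$, the desired contradiction. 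I expect the main obstacle to be the submodularity step that replaces the standard additive telescoping used in the Caragiannis et al.\ argument; a secondary subtlety is carefully handling matchings in which some class has forced-zero value, which the lexicographic convention resolves since any transfer either enlarges the support or, when $V_i(X) > 0$ and $V_j(X) \ge V_i(X) + 2$, strictly increases the product of positive values.
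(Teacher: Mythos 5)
Your proof is correct and follows the same overall architecture as the paper's: assume CEF1 fails for a pair $i,j$, derive $V_j(X) \ge V_i(X) + 2$, locate an item in $Y_j(X)$ whose transfer raises $V_i$ by one, and conclude $(V_i(X)+1)(V_j(X)-1) > V_i(X)V_j(X)$, contradicting CNSW-maximality. The one genuine difference is how the transferable item is found. The paper argues directly: the optimistic matching of $Y_j(X)\setminus\{o\}$ into $N_i$ saturates at least one agent $a'$ that $X$ leaves unsaturated in class $i$, and the item $o'$ that this optimistic matching assigns to $a'$ can simply be handed to $a'$. You instead prove existence of an item with positive marginal value by a telescoping argument using submodularity of $V_i^*$ (the rank function of the transversal matroid on items), then re-optimize the whole class-$i$ matching on $Y_i(X)\cup\{o^*\}$. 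Both are sound; the paper's augmenting-edge argument is more elementary and self-contained, while yours is shorter once submodularity of $V_i^*$ is granted and generalizes more readily to other subadditive class valuations. You are also more careful than the paper on two points worth noting: you use the correct negation of CEF1 (the inequality fails for \emph{every} $o \in Y_j(X)$, not just some $o$), and you handle the degenerate case where some class has value zero — where the product is $0$ and the naive ``any edit increases CNSW'' argument for non-wastefulness breaks down — by adopting the standard lexicographic tie-breaking convention. The paper's one-line non-wastefulness argument silently assumes all class values are positive, so your added convention is a fix rather than an unnecessary complication, though strictly speaking it amends the paper's stated definition of CMNW.
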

\begin{proof}
    Consider a matching $X$ that maximizes the CNSW objective. 
    We first note that any wasteful matching can necessarily increase the CNSW objective by matching any wasted item, so we have non-wastefulness must hold.
    
    Now, towards contradiction, suppose that $X$ does not satisfy CEF1.
    By the definition of CEF1, this implies that there exists two classes $i,j \in [k]$ such that 
    \begin{align*}
        V_i(X) < V_i^*(Y_j(X) \setminus \{o\}),
    \end{align*}
    for some $o \in Y_j(X)$.
    Let $V_i(X) = |X_i| = s$. By the assumption above, we must have
    \begin{align*}
        V_i^*(Y_j(X) \setminus \{o\}) > V_i(X) = |X_i| = s,
    \end{align*}
    and thus $|Y_j(X) \setminus \{o\}| \ge s+1$ for some $o \in Y_j(X)$ since $s$ is an integer. Combining this result with the non-wastefulness of a CNSW maximizing matching, we further have that $V_j(X) = |X_j| = |Y_j(X)| \ge s+2 > V_i(X) + 1$. We now verify the following claim that will be crucial in proving the final CEF1 result.

    \begin{claim} \label{claim:cnsw}
        There exists an item $o' \in Y_j(X)$ such that
        $$V_i(X_i \cup \{o'\}) > V_i(X_i)$$
    \end{claim}

    Before proving this claim, we show how it yields the CEF1 guarantee: starting from matching $X$, consider a modified allocation $X'$ that moves the item $o' \in Y_j(X)$ from Claim~\ref{claim:cnsw} to $X_i$. We compute the CNSW of this modified matching as
    \begin{align*}
        \textsc{CNSW}(X') &= \left( (V_i(X)+1) \cdot (V_j(X)-1) \prod_{[k] \ni s \neq i,j} V_s(X) \right)^{1/k} \\
        &\ge \left(V_i(X) \cdot V_j(X) \prod_{[k]\ni s \neq i,j} V_s(X)\right)^{1/k} \\
        &= \textsc{CNSW}(X)
    \end{align*}
    where the inequality follows from the contradictory assumption. Naturally, this contradicts the maximality of \textsc{CNSW}$(X)$, thus we must have that $X$ is a CEF1 matching.
\end{proof}
We conclude the theorem's proof by verifying Claim~\ref{claim:cnsw}.
\begin{proof}[Proof of Claim~\ref{claim:cnsw}]
    Fix an item $o \in Y_j(X)$ and let $S_i$ denote the set of agents in class $i$ who are matched under $X$. We similarly define $S^*_i$ be the set of saturated agents in class $i$ under the optimistic matching of items in $Y_j(X) \setminus \{o\}$.
    Since $V_i^*(Y_j(X) \setminus \{o\}) > Y_i(X)$, we have $|S^*_i \setminus S_i| \ge 1$.
    
    Now, pick any agent $a' \in S^*_i \setminus S_i$, and let $o'$ denote the item that is matched to $a'$ under the optimistic matching on $Y_j(X) \setminus \{o\}$.
    By the selection of $a'$ and $o'$, we must have that $o'$ can be matched to agent $a'$ without any conflicts. Therefore, $V_i(Y_i(X) \cup \{o'\}) > V_i(X)$.
\end{proof}

\eat{
\begin{remark}
    Note in fact that the above proof analogously follows when we change CEF1 to be CEFX, i.e., CNSW implies CEFX.
    (Although it needs to be verified) It also seems there exists a problem instance that satisfies CEF1 but not CNSW, thus CNSW is strictly stronger property. \ms{I think because for matching (homogenous items) CEF1 and CEFX are equivalent, this result does not really add much. It would be cleaner to just discuss CEF1}
    \suho{I agree that it is not important enough to discuss,, but I don't exactly understand that CEF1 and CEFX is equivalent. Isn't it true that CEFX and CEF1 can be different with respect to how the edges are connected to two agents with envy relationship?} \ms{I am still unsure i believe this extends to CEFX since any item allocated to bundle $Y_j(X)$ may not necessarily have value 1 to class $i$?}
\end{remark}

\begin{remark}
    Furthermore, it can analogously be shown that an allocation that maximizes maximin utility satisfies CEF1.
    In fact, for any $\alpha$-fair utility of $\sum_{i \in [k]} \frac{V_i(X)^{1-\alpha}}{1-\alpha}$ for $\alpha \in [0,1) \cup (1,\infty)$, any allocation that maximizes $\alpha$-fair utility satisfies CEF1 if $(x+1)^{1-\alpha} + (y-1)^{1-\alpha} > x^{1-\alpha} + y^{1-\alpha}$ for any natural numbers $x < y$.
    Note that NSW corresponds to the case where $\alpha \rightarrow 0$. \ms{i am not sure this result holds for any non-wasteful algorithm. see appendix B2 in the Hosseini paper}
\end{remark}
\suho{I think it might be better to skip above two results for this submission version}
}

\begin{figure}[t]
\includegraphics[width=0.4\textwidth]{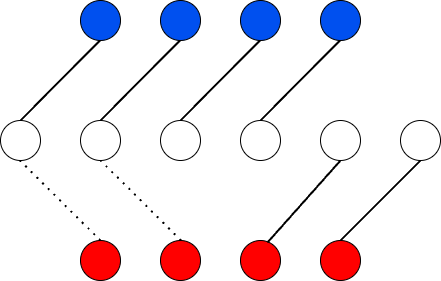}
\centering
\caption{Hardness instance for Theorem~\ref{thm:cef_not_nsw}.}
\label{fig:cef_not_nsw}
\end{figure}

We demonstrate that although the CMNW matching provides the favorable CEF1 fairness guarantee, the opposite relationship is not necessarily ensured. The following theorem formalizes this notion that CMNW is a strictly stronger property than CEF, which itself is stronger than CEF1.
\begin{theorem}\label{thm:cef_not_nsw}
    A non-wasteful CEF matching is not guaranteed to maximize CNSW.
\end{theorem}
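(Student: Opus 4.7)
The plan is to prove the theorem by exhibiting an explicit instance, corresponding to Figure~\ref{fig:cef_not_nsw}, on which some non-wasteful CEF matching has strictly smaller CNSW than another non-wasteful matching on the same instance. Since CMNW is defined as a \emph{maximizer} of CNSW, this demonstrates that non-wastefulness together with CEF is insufficient to guarantee the CMNW property.

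Concretely, I would consider a two-class instance with $N_1 = \{a\}$ and $N_2 = \{b_1, b_2, b_3\}$, together with four items $o_1, o_2, o_3, o_4$, where $o_1$ is adjacent only to $a$, and $o_2, o_3, o_4$ are each adjacent to every agent in $N$. I would then compare two non-wasteful matchings. The ``good'' matching $X^{\mathrm{good}}$ assigns $o_1 \mapsto a$ and $o_2, o_3, o_4 \mapsto b_1, b_2, b_3$, yielding $V_1 = 1$ and $V_2 = 3$, hence $\textsc{CNSW}(X^{\mathrm{good}}) = \sqrt{3}$. The ``bad'' matching $X^{\mathrm{bad}}$ assigns $o_2 \mapsto a$, $o_3 \mapsto b_1$, $o_4 \mapsto b_2$, leaving $o_1$ unmatched, which is allowed because $o_1$'s only neighbor $a$ is already saturated; it gives $V_1 = 1$ and $V_2 = 2$, so $\textsc{CNSW}(X^{\mathrm{bad}}) = \sqrt{2} < \sqrt{3}$.

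It then remains to check the three key properties for $X^{\mathrm{bad}}$. Non-wastefulness is immediate, since every unmatched agent $b_3$ has no incident unassigned item it likes, and the unassigned item $o_1$ has no incident unsaturated agent it likes. For CEF, I would verify both envy inequalities: $V_1^*(Y_2(X^{\mathrm{bad}})) = V_1^*(\{o_3, o_4\}) = 1$ since class $1$ contains only the single agent $a$, so $V_1(X^{\mathrm{bad}}) = 1 \ge 1$; and $V_2^*(Y_1(X^{\mathrm{bad}})) = V_2^*(\{o_2\}) = 1$ since $o_1$ is not adjacent to any agent in $N_2$, so $V_2(X^{\mathrm{bad}}) = 2 \ge 1$. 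Together with the higher CNSW attained by $X^{\mathrm{good}}$, this establishes that $X^{\mathrm{bad}}$ is a non-wasteful CEF matching that is not CMNW.

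I do not anticipate a serious obstacle. The only subtle point is ensuring the bad matching is genuinely non-wasteful in a non-trivial manner (rather than simply discarding an item that could have been matched), which is why the instance includes the ``private'' item $o_1$ that is blocked by the structure of the edges once $a$ has been saturated by a more useful item; this blocking is exactly what lets the CEF constraint remain satisfied with a suboptimal geometric-mean objective.
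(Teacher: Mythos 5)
Your proposal is correct and takes essentially the same approach as the paper: both exhibit an explicit two-class instance with a ``private'' item for class~1 whose matching opportunity is squandered by a non-wasteful CEF matching, which is then beaten in CNSW by a matching that routes that opportunity correctly. Your instance is in fact a slightly more economical version of the paper's (one agent versus four in class~1, and the loss realized by leaving $o_1$ unassigned rather than by misallocating it to the other class), and all three verifications---non-wastefulness, both CEF inequalities, and the CNSW gap $\sqrt{2}<\sqrt{3}$---check out.
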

\begin{proof}
    Consider two classes $N_1$ and $N_2$, each of which consists of four agents, denote these agents by $a_1,a_2,a_3,a_4$ and $b_1,b_2,b_3,b_4$ respectively.
    Fix an adversarial sequence of six items indexed by $1,2,\ldots, 6$.
    Items $1,2,3$ and $4$ have edges with all the agents in class $N_2$, and item $1$ further has an edge with agent $a_1$.
    Items $5$ and $6$ have edges with agents $a_3$ and $a_4$.
    Consider a matching $X$ that matches items $1,2,3,4$ to class $N_2$ and $5,6$ to class $N_1$ (See Figure~\ref{fig:cef_not_nsw} for a depiction).
    Due to the construction of edges, this must be a non-wasteful matching and the NSW is computed to be $\sqrt{2 \cdot 4}$.
    Note further that allocating items $1,2,3,4$ to class $N_1$ only induces a matching to class $N_2$ of optimistic value $V_2(Y_1^*(X) = 2$, and thus $X$ is CEF.
    On the flip side, if we consider an allocation $X'$ that allocates items $2,3,4$ to class $N_2$ and $1,5,6$ to class $N_1$, this also constitutes a non-wasteful matching and the NSW is $\sqrt{3 \cdot 3} > \sqrt{2 \cdot 4}$.
    Therefore, a non-wasteful CEF matching does not necessarily maximize the CNSW objective.
    % This concludes that CEF does not imply CMNW. \ms{I would add a figure for this (I can make one if that's easeri) and just follow in the same proof presentation as the original Hosseini paper did for their impossibility constructions. Will be easier to follow what the instance is}
\end{proof}

\section{Omitted Proofs}\label{sec:omitted}
\subsection{Proofs from Section 2} \label{sec:omit-2}
\begin{proof}[Proof of Proposition~\ref{prop:nw-usw}]
    Let $X^*$ be a matching that maximizes the USW objective, and let $X$ be any non-wasteful matching in the same instance. By definition of non-wastefulness above, we must have that each edge in the graph $G$ has at least one end point included in the matching, ie. for every $(a,o) \in E$, $\sum_{o' \in M} x_{a,o'} = 1$ or $\sum_{a' \in N} x_{a',o} = 1$. Therefore, we have
    \begin{align*}
        \text{usw}(X^*) &= \sum_{a \in N} \sum_{o \in M} x^*_{a,o} \\
        &\le \sum_{(a,o) : x^*_{a,o} = 1} \left( \sum_{a' \in N} x_{a',o} + \sum_{o' \in M} x_{a,o'} \right) \\
        &\le \sum_{a \in N} \sum_{o' \in M} x_{a,o'} + \sum_{a' \in N} \sum_{o \in M} x_{a',o} \\
        &= 2 \cdot \text{usw}(X)
    \end{align*}
    where the first inequality comes from the fact that $x^*_{(a,o)} = 1$ implies at least one of the end points for each $(a,o)$ included in the maximum maching must be in the non-wasteful matching. By the summation properties on edges in a non-wasteful matching we obtain the final equality and, therefore, a $\frac12$-USW approximation.
\end{proof}

\subsection{Proofs from Section 3} \label{sec:omit-3}
\begin{proof}[Proof of Theorem~\ref{thm:rand_alg}]
The non-wastefulness of the algorithm follows immediately from its definition: at each step, the arriving item is allocated to an unsaturated agent that likes the item.
Also, USW immediately follows from Proposition~\ref{prop:nw-usw}.
Thus, in what follows, we focus on proving CEF and CPROP guarantees of the algorithm.

% \noindent \textbf{NW.} The non-wastefulness of our algorithm follows immediately from its definition: at each step, the arriving item is allocated to an unsaturated agent that likes the item.

% \noindent \textbf{USW.} This immediately follows from Proposition~\ref{prop:nw-usw}.

\noindent \textbf{CEF.} 
Let $X$ be the matching constructed by the randomized algorithm after the arrival of all items in $M$. 
Consider any two distinct classes $i,j \in [k]$. We seek to prove that 

$$\Ex{V_i(X)} \ge \frac12 \cdot \Ex{V_i^*(Y_j(X))}.$$

According to our algorithm, for any item $o$ liked by at least one agent in the class $i$, the probability of allocating this item to $i$ is at least that of allocating to another class $j$ unless, at step $o$, all agents who like the arriving item are already saturated.

For ease of analysis, we consider a simple upper bounding scheme that leverages the above fact. 
Specifically, upon the arrival of item $o$ that is liked by at least one agent in $i$ but all such agents are saturated, we create a dummy item $\overline{o}$ that is identical to $o$ and add it to the set of matched items to class $i$. We therefore augment the set $X_i$ to include these dummy items which we denote as the set $A_i \supseteq X_i$. 
Using this, we prove the following claim.

\begin{claim} \label{claim:cef2}
    The optimistic value of set $A_i$ to class $i$, denoted as $V_i^*(A_i)$, is at most twice the size of its matching in $X$. More formally: 
    \begin{align*}
        V_i^*(A_i) \le 2 \cdot V_i(X).
    \end{align*}
\end{claim}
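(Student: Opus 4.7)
The plan is to prove the bound $V_i^*(A_i) \le 2 V_i(X)$ pointwise on every realization of \textsc{Random}, so that the inequality survives taking expectations in the sketch above. Let $S \subseteq N_i$ denote the set of class-$i$ agents that are saturated at the end of the run, and write $\bar S := N_i \setminus S$. Since each saturated agent is matched to a distinct item of $X_i$, we have $|S| = |X_i| = V_i(X)$. Fix a matching $M^*$ between $N_i$ and $A_i$ that witnesses $V_i^*(A_i)$, where each dummy item $\bar o$ is treated as sharing the $N_i$-neighborhood of its originating item $o$.

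The argument proceeds by splitting the edges of $M^*$ according to whether their class-$i$ endpoint lies in $S$ or in $\bar S$. The contribution from $S$ is at most $|S| = V_i(X)$ simply because $M^*$ is a matching. The crux of the proof, and where the dummy augmentation earns its keep, is to show that every agent $a \in \bar S$ that participates in $M^*$ is matched to a \emph{real} item of $X_i$, never to a dummy. Suppose for contradiction that some $a \in \bar S$ were matched in $M^*$ to a dummy $\bar o$; then $(a,o) \in E$. By the construction of $A_i$, the dummy $\bar o$ was introduced only because at the arrival of $o$ every agent in $N_i$ who likes $o$ was already saturated. However, $a$ is never matched by \textsc{Random} and matching decisions are irrevocable, so $a$ was unsaturated at step $o$, a contradiction. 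Hence the $\bar S$-edges of $M^*$ inject into $X_i$, contributing at most $|X_i| = V_i(X)$ more edges, and summing gives $|M^*| \le 2 V_i(X)$.

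The main subtlety I anticipate is making the dummy bookkeeping precise, in particular that a dummy $\bar o$ is added to $A_i$ exactly when class $i$ is absent from the candidate set $S_o$ at step $o$, and that $\bar o$ is endowed with the same $N_i$-neighborhood as $o$ for the purpose of computing $V_i^*(A_i)$. Once these conventions are pinned down, the only genuine insight is the contradiction isolating unsaturated agents from dummy items; everything else reduces to a routine count of saturated versus unsaturated agents appearing in $M^*$.
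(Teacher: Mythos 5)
Your proof is correct and rests on the same key observation as the paper's: a dummy item is, by construction, adjacent in $N_i$ only to agents already saturated, so the algorithm's matching is maximal on the instance $(N_i, A_i)$. The only cosmetic difference is that you inline the standard ``maximal matching is half of maximum'' charging argument (splitting $M^*$ by saturated versus unsaturated agents), whereas the paper black-boxes that step by invoking Proposition~\ref{prop:nw-usw}.
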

\begin{proof}
    We first observe that if $A_i = X_i$, then $V_i^*(A_i) = V_i^*(X)$ and since the optimistic value constructs a maximal matching, the claim holds by application of Proposition \ref{prop:nw-usw}. 
        
    We therefore assume that $X_i \subsetneq A_i$. By the order of arrival of items, we must have that $V_i(X) = V_i(A_i)$ since the items in $A_i \setminus X_i$ can only be matched to agents that are currently saturated. 
    However, under the optimistic valuation, the matching can only increase in size, i.e., $V_i^*(X) \le V_i^*(A_i)$. 
    Again, by application of Proposition \ref{prop:nw-usw} and the maximality of the optimistic value of $A_i$, we must have that 
    
    $$V_i^*(A_i) \le 2 \cdot V_i(A_i) = 2 \cdot V_i(X)$$ completing the proof of the claim.
\end{proof}

We lastly need the following lemma to proof the main theorem.
\begin{lemma} \label{lem:cef_exp}
    For any two distinct classes $i,j \in [k]$, $\Ex{V_i^*(A_i)} \ge \Ex{V_i^*(Y_j(X))}$ where expectations are taken over the randomness of Algorithm~\ref{alg:random}.
\end{lemma}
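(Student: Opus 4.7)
The plan is to prove the lemma by a coupling argument that exploits the class-symmetric nature of the uniform random selection in Algorithm~\ref{alg:random}, followed by a structural (Hall-type) argument to compare the two maximum matchings.

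First, I would re-view the algorithm's randomness via pre-sampled uniformly random permutations $\pi_o$ of the classes $[k]$ for each arriving item $o$; the algorithm then picks the class in $S_o$ that appears earliest in $\pi_o$, which is distributionally equivalent to the uniform selection over $S_o$ in Algorithm~\ref{alg:random}. Fix any item $o$ liked by at least one agent of class $i$, and condition on the history just before step $o$. If $i \notin S_o$, then $o$ is added to $A_i$ as a dummy by construction, regardless of $\pi_o$. If $i \in S_o$, then $o \in X_i \subseteq A_i$ exactly when $i$ appears first among $S_o$ in $\pi_o$, which has conditional probability $1/|S_o|$---the same probability (by symmetry) as $j$ appearing first, which is the event $o \in Y_j(X)$ whenever $j \in S_o$. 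Items not liked by $N_i$ never contribute to $V_i^*(Y_j(X))$, so this establishes the per-item marginal domination $\Pr{o \in A_i \mid \mathcal{H}_o} \ge \Pr{o \in Y_j(X) \mid \mathcal{H}_o}$ for every item relevant to the comparison.

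The main step is to lift this marginal domination to the combinatorial set function $V_i^*$. My plan is to show that on every realization, $V_i^*(A_i) \ge V_i^*(Y_j(X))$ by constructing an injection from the edges of a maximum optimistic matching $M^*$ of $(N_i, Y_j(X))$ into a valid matching on $(N_i, A_i)$. Each edge $(a,o) \in M^*$ with $o \in A_i$ is retained unchanged. For each edge $(a,o) \in M^*$ with $o \notin A_i$, the condition $o \in Y_j(X)$ forces $i \in S_o$ (since otherwise $o$ would already be a dummy in $A_i$), yielding a witness agent $a' \in N_i$ that is unsaturated at step $o$ and likes $o$; the eventual fate of $a'$, either remaining unsaturated at termination or being matched to some $o' \in X_i \subseteq A_i$, supplies the needed substitute.

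The main obstacle, in my view, is assembling these per-edge substitutions into a single consistent matching on $A_i$ of cardinality $|M^*|$: the naive pointwise map $o \mapsto o$ fails exactly when $o \in Y_j(X) \setminus A_i$, and the witness agents $a'$ may collide across different edges of $M^*$. My proposed resolution is a Hall-type deficiency argument on the bipartite graph between $N_i$ and $A_i$, using the collected witnesses to verify Hall's condition against the image of $M^*$ and conclude that a matching of size $|M^*|$ exists on $(N_i, A_i)$. Taking expectations then gives $\Ex{V_i^*(A_i)} \ge \Ex{V_i^*(Y_j(X))}$, which combined with Claim~\ref{claim:cef2} yields the $\tfrac12$-CEF guarantee of Theorem~\ref{thm:rand_alg}.
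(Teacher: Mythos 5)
Your per-item analysis is fine and coincides with the paper's: the paper proves exactly the same two facts, namely that for every item $o$ liked by some agent of class $i$ one has $\Pr{o \in A_i} \ge \Pr{o \in Y_j(X)}$, via the trichotomy that either $\Pr{o \in Y_j(X)} = 0$, or $\Pr{o \in A_i} = 1$ (the dummy case), or the two probabilities are equal by the symmetry of the uniform choice over $S_o$. The gap is in your ``main step.'' The realization-wise inequality $V_i^*(A_i) \ge V_i^*(Y_j(X))$ is false, so no Hall-type deficiency argument can establish it. Take $k = 2$, $N_i = \{a\}$, $N_j = \{b\}$, and a single item $o$ liked by both agents. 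With probability $\tfrac12$ the algorithm matches $o$ to $b$; on that sample path $Y_j(X) = \{o\}$ and $V_i^*(Y_j(X)) = 1$, while $A_i = X_i = \emptyset$ because no dummy copy is created ($a$ was unsaturated and liked $o$ at its arrival), so $V_i^*(A_i) = 0$. Your proposed repair breaks precisely here: the witness agent $a' = a$ falls into the branch ``remains unsaturated at termination,'' and in that branch there is no item of $A_i$ to which the edge $(a,o)$ of $M^*$ can be rerouted --- Hall's condition fails because $A_i$ can be strictly poorer than $Y_j(X)$ on a fixed realization. The domination holds only \emph{in expectation} (in the example both expectations equal $\tfrac12$), which is why the paper never attempts a pointwise coupling and instead passes from the per-item probability bounds directly to the expected optimistic values.

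A secondary remark: the step you are trying to make rigorous is genuinely the delicate one. The paper's own conclusion (``we therefore have the result by linearity of expectations'') is terse, since $V_i^*$ is a maximum-matching value and not an additive function of item memberships, so marginal probability domination does not automatically transfer to a monotone set function without a coupling. Your instinct that something structural is needed here is reasonable; but any such argument must be carried out at the level of expectations (or via a coupling that respects the joint distribution of $A_i$ and $Y_j(X)$), not realization by realization.
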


The combination of this lemma with Claim \ref{claim:cef2}, we have that 

$$\Ex{V_i(X)} \ge \frac12 \cdot \Ex{V_i^*(A_i)} \ge \frac12 \cdot \Ex{V_i^*(Y_j(X))}$$ thus, we have the $\frac12$-CEF guarantee and have proven the theorem.

It therefore remains to prove Lemma \ref{lem:cef_exp}.
\begin{proof}[Proof of Lemma~\ref{lem:cef_exp}]
    To prove the lemma, we need to verify two essential claims.
    \begin{claim}
        For an arbitrary item, $o$, liked by an agent in class $i \in [k]$, the probability that $o$ (or its copy) is in $A_i$ is greater than or equal to the probability that $o \in Y_j(X)$ for $i \neq j \in [k]$.
    \end{claim}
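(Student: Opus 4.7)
The plan is to condition on the algorithm's random history just before $o$'s arrival, write both conditional probabilities in closed form, and check the inequality case by case; taking expectations then yields the claim.

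Concretely, let $\mathcal{H}$ denote the sequence of random choices made by Algorithm~\ref{alg:random} while processing every item arriving strictly before $o$. This history deterministically fixes which agents are currently saturated and hence determines the set $S_o \subseteq [k]$ of classes still containing an unsaturated agent who likes $o$. Given $\mathcal{H}$, the event $o \in Y_j(X)$ occurs precisely when the algorithm draws class $j$ out of $S_o$, so $\Pr{o \in Y_j(X) \mid \mathcal{H}} = \mathbf{1}[j \in S_o]/|S_o|$. For class $i$, the event ``$o$ or a copy of $o$ lies in $A_i$'' decomposes into two disjoint sub-events: either $o$ is matched to $i$ (probability $\mathbf{1}[i \in S_o]/|S_o|$), or a dummy copy is appended because some agent in $i$ likes $o$ while every such agent is already saturated (probability $\mathbf{1}[i \notin S_o]$, under the claim's hypothesis that class $i$ has an agent who likes $o$). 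Summing gives $\Pr{o \text{ or copy in } A_i \mid \mathcal{H}} = \mathbf{1}[i \in S_o]/|S_o| + \mathbf{1}[i \notin S_o]$.

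A short case split on whether $i \in S_o$ then settles the pointwise inequality. If $i \in S_o$, the left-hand side equals $1/|S_o|$ while the right-hand side is at most $1/|S_o|$. If $i \notin S_o$, the left-hand side equals $1$ and the right-hand side is at most $1/|S_o| \le 1$. In both cases $\Pr{o \text{ or copy in } A_i \mid \mathcal{H}} \ge \Pr{o \in Y_j(X) \mid \mathcal{H}}$, and the tower property delivers the unconditional claim.

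The main point to be careful with, rather than a true obstacle, is that the dummy-item construction is exactly calibrated to compensate for the saturation-blockage case: whenever class $i$ can no longer receive $o$, a copy is appended deterministically, dominating the at-most-$1/|S_o|$ probability that class $j$ receives $o$. This is precisely why the augmentation $A_i \supseteq X_i$ was defined the way it is, and once the correspondence is made explicit nothing further is required.
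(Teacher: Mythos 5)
Your proof is correct and follows essentially the same route as the paper's: both split on whether class $i$ still has an unsaturated agent liking $o$ at its arrival (your $i \in S_o$ versus $i \notin S_o$), using that in the first case $o$ goes to $i$ with the same $1/|S_o|$ probability as to any other eligible class, and in the second a dummy copy enters $A_i$ with probability $1$. Your version merely makes explicit the conditioning on the history and the tower-property step that the paper leaves implicit.
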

    \begin{proof}
        If there exists an agent in class $i$ that likes $o$ and is unsaturated at the time of its arrival, then the item will be added to $X_i$ with equal probability to any other class $j$ containing such an agent, so the claim holds.

        If instead all such agents in class $i$ are saturated, then a copy of $o$ will be added to $A_i$ with probability 1. Thus, we have the claim.
    \end{proof}

    We now leverage this claim to prove a slightly stronger result which will ultimately yield the lemma result.
    \begin{claim}
        For an arbitrary item $o$ liked by an agent of class $i$, we must have that at least one of the following properties hold:
         \begin{enumerate}%[\normalfont(i)]
            \item for $j \neq i \in [k], \Pr{o \in Y_j(X)} = 0$
            \item $\Pr{o \in A_i} = 1$
            \item or $\Pr{o \in Y_j(X)} = \Pr{o \in A_i}$
        \end{enumerate}
    \end{claim}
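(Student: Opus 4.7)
The plan is to establish the trichotomy by conditioning on the state $\sigma$ of the algorithm immediately before item $o$ arrives, meaning the joint record of the random choices made at steps $1, 2, \ldots, o-1$. Given $\sigma$, both the set $S_o$ and the saturation status of every agent are fixed, and the probabilities appearing in the statement reduce to explicit expressions depending only on the single uniform choice the algorithm makes at step $o$ together with the deterministic dummy-insertion rule that defines $A_i$. I would show that for every such $\sigma$ at least one of the three conditions is forced; aggregating over $\sigma$ then recovers the unconditional inequality $\Pr{o \in A_i} \ge \Pr{o \in Y_j(X)}$ claimed just before, and more importantly supplies the per-realization structure needed to transport this item-by-item dominance into the optimistic-value lemma (Lemma~\ref{lem:cef_exp}).

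The case analysis splits into three exhaustive situations. First, if in state $\sigma$ no unsaturated agent of class $j$ likes $o$, then $j \notin S_o$ and $o$ is never assigned to $N_j$, giving $\Pr{o \in Y_j(X)} = 0$ and hence condition~(1). Second, if instead every agent of class $i$ that likes $o$ is already saturated in $\sigma$, then by the construction of $A_i$ a dummy copy of $o$ is appended to $A_i$ with certainty, yielding $\Pr{o \in A_i} = 1$ and condition~(2). Third, in any remaining state both $i$ and $j$ contain an unsaturated agent liking $o$, so $\{i,j\} \subseteq S_o$ and Algorithm~\ref{alg:random} chooses each of $i, j$ with probability exactly $1/|S_o|$; moreover, because $i$ has an unsaturated agent liking $o$, no dummy copy of $o$ is inserted into $A_i$ under this state, so the event $\{o \in A_i\}$ coincides precisely with the event that $o$ is matched to class $i$. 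Consequently $\Pr{o \in A_i} = 1/|S_o| = \Pr{o \in Y_j(X)}$, which is condition~(3).

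The main care in the argument lies in this third case: one has to observe that the definition of $A_i$ appends a dummy copy of $o$ \emph{only} when every agent of $N_i$ liking $o$ is saturated, and this asymmetry is exactly what collapses the inequality into an equality instead of a strict dominance. I would also flag in the write-up that the $\Pr{\cdot}$ statements are best read as conditional probabilities given the pre-arrival state $\sigma$, with averaging over $\sigma$ recovering the unconditional dominance. This per-state trichotomy is then precisely the structural hook exploited in the proof of Lemma~\ref{lem:cef_exp}, where it lets one couple $Y_j(X)$ with $A_i$ on an item-by-item basis and conclude $\Ex{V_i^*(A_i)} \ge \Ex{V_i^*(Y_j(X))}$, closing the chain of inequalities that delivers the $\tfrac{1}{2}$-CEF guarantee.
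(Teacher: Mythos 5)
Your proposal is correct and follows essentially the same three-case analysis as the paper's own proof (no viable agent in $j$; all viable agents in $i$ saturated; both classes have an unsaturated agent liking $o$), with the added care of conditioning explicitly on the pre-arrival state and phrasing the first case as ``no \emph{unsaturated} agent of $j$ likes $o$,'' which makes the case split genuinely exhaustive where the paper's wording is slightly loose. No substantive difference in approach.
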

    \begin{proof}
        Upon the arrival of $o \in M$, there are three possibilities for its irrevocable assignment. 
        If there is no agent in $j$ that likes $o$, we must have that $\Pr{o \in Y_j(X)} = 0$. Alternatively, if such an agent exists and is currently unsaturated but all the viable agents in $i$ are saturated, we must have that $o$ is added to $A_i$. 
        Lastly, if both classes have unsaturated agents that like the item, then either $Y_j(X)$ or $X_i \subseteq A_i$ will receive the item with equal probability. 
        Thus, we have the claim.
    \end{proof}

    We now have the necessary tools to argue about the expected optimistic values of the sets $A_i$ and $Y_j(X)$ according to class $i$. 
    Clearly, only items in $Y_j(X)$ that are liked by agents of class $i$ contribute to the optimistic valuation. 
    By the above claims, each such item is allocated to $A_i$ with greater than or equal to the probability that they are allocated to $Y_j(X)$. 
    We therefore have the result by linearity of expectations for these item assignments.
\end{proof}

\noindent \textbf{CPROP.} The analysis of our CPROP guarantee relies on a modification to that of Equal-Filling-OCS from \cite{hosseini2022class}. 

\begin{proof}
    Let $f(x)$ denote the number of agents who are matched at least $x \in [0,1]$ (where $x=1$ would imply the agent is saturated) under the divisible matching that corresponds to the probabilities from \textsc{Random}. At each step of the online input stream, an item $o \in M$ arrives and the random algorithm produces a vector $(\tilde x_{a,o})_{a \in N}$ of probabilities for selecting each agent as follows: if an agent does not like the item, then $x_{a,o} = 1$, otherwise we set this value to be the probability of its selection by \textsc{Random}. We then select an agent to be matched to the item with probability $\tilde x_{a,o}$.

    By the end of the item arrivals, each agent is selected to match to an arriving item with probability 
    \begin{align*}
        1 - \prod_{o \in M} ( 1 - \tilde x_{a,o}) &\ge 1 - e^{-\sum_{o \in M} \tilde x_{a,o}}.
    \end{align*}
    We henceforth denote the above probability bound by $p(\tilde x_a)$ for brevity.
    We can therefore bound the expected value of the matching to a class $i$ by:
    \begin{align*}
        \Ex{V_i(X)} &\ge \sum_{a \in N_i} p(\tilde x_a) \\
        &= -\int_{0}^\infty p(\theta) df(\theta) \\
        &= \int_0^\infty p'(\theta) f(\theta) d\theta
    \end{align*}
    where the first equality comes from the definition of $f(\theta)$ and the last from an integration by parts. 

    As noted in \cite{hosseini2022class}, we have that for any divisible matching $\tilde X$ denoted by $\tilde Y_i$ for $i \in [k]$ such that $\sum_{i \in [k]} \tilde Y_{i,o} = 1$ for each $o \in M$:
    $$\sum_{j \in [k]}V_i^*(\tilde Y_j) \le k \cdot \left( \int_0^\theta f(z)dz + f(\theta) \right) \quad \forall \theta > 0.$$
    Therefore, the prop$_i$ value (which is a maximization over \emph{divisible matchings}) can be upper bounded as $$\text{prop}_i \le \int_0^\theta f(z)dz + f(\theta)$$ for all $\theta > 0$. We can further multiply this bound by non-negative coefficients and integrate with respect to $\theta$ to obtain
    \begin{align*}
        \text{prop}_i \cdot \int_0^\infty c(\theta) d\theta &\le \int_0^\infty c(\theta) \left(\int_0^\theta f(z)dz + f(\theta)\right)d\theta \\
        &= \int_0^\infty c(\theta) \int_0^\theta f(z)dz d\theta + \int_0^\infty c(\theta)f(\theta)d\theta \\
        &= \int_0^\infty \left(\int_z^\infty c(\theta)d\theta + c(z)\right) f(z) dz
    \end{align*}
    If we now choose the coefficients so that the relation $\int_z^\infty c(\theta)d\theta + c(z) = p'(z)$ holds for all $z > 0$ % \suho{is this choice well-defined? (just asking as I haven't verified)}
    , we obtain that
    $$\text{prop}_i \cdot \int_0^\infty c(\theta)d\theta \le \int_0^\infty p'(z) f(z) dz \le \Ex{V_i(X)}.$$
    We lastly obtain the approximation factor by directly computing the integral
    % \suho{where's the first equation coming from?}
    \begin{align*}
        \int_0^\infty c(\theta) d\theta &= \int_0^\infty -e^{\theta} \int_\theta^\infty p''(y)e^{-y}dy d\theta\\
        &= \int_0^\infty e^{\theta} \int_\theta^\infty e^{-2y}dy d\theta \\
        &= \frac12 \int_0^\infty e^{-\theta} d\theta = \frac12.
    \end{align*}
    Thus, we have the result.
\end{proof}
\end{proof}

\subsection{Proofs from Section 4} \label{sec:omit-4}
\subsubsection{Indivisible Setting}
\begin{proof}[Proof of Theorem~\ref{thm:rand-ind-upper}]
    
Let $\tau$ denote the step in the input stream where no further items can be matched to the first class and note that $\tau \ge \frac{n}{2}$. 
Further note that, after step $\tau$, all items will be matched to class 2. 
Let $n_1(t)$ be the random variable denoting the number of available agents in class 1 for the item arriving at time $t$ and let $x(t)$ denote the number of items remaining in the input stream. 
We additionally let $\textsc{OPT}_t$ denote the optimal matching agent in class 1 at time $t$. 
Further denote $\Delta n_1 = n_1(t) - n_1(t-1)$ and $\Delta x = x(t) - x(t-1)$. 
Naturally, $\Delta x = - 1$ after every iteration, but for the $n_1(t)$ value we must consider three potential scenarios:
\begin{itemize}
    \item $\Delta n_1 = 0$ : this occurs when $\textsc{OPT}_t$ is already saturated and the arriving item is matched to class 2.
    \item $\Delta n_1 = -2$ : this occurs when $\textsc{OPT}_t$ is unsaturated and the arriving item is not optimally matched to this agent.
    \item $\Delta n_1 = -1$ : this occurs in all other events.
\end{itemize}

Using these facts, we obtain the following lemma.
\begin{lemma} \label{lem:stop-time}
    In the setting of the proof of Theorem~\ref{thm:rand-ind-upper}, the expected value of $\tau$ is $\frac{n}{e^2} + o(n)$.
\end{lemma}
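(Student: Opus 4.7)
The plan is to pass from the one-step case analysis above to a deterministic continuum-limit ODE for the process $n_1(t)$, solve the ODE in closed form, and read off the fraction of items that arrive after class $1$ has been exhausted. I would first introduce the auxiliary quantity
\[
    q(t) := \Pr{\text{agent } t \text{ is unsaturated when item } t \text{ arrives}},
\]
which by the exchangeability of the not-yet-seen agents $a\ge t$ coincides with the probability that any such fresh agent is still unsaturated. Combining the case analysis for $\Delta n_1$ with the fact that \textsc{Random} flips a fair coin between the two classes (whenever both contain an eligible agent) and then chooses an unsaturated agent uniformly within the selected class, the conditional one-step drift evaluates to
\begin{align*}
    \mathbb{E}[\Delta n_1 \mid \mathcal{F}_{t-1}]
    \;=\; -\,q(t) \;-\; \tfrac12 \;+\; \frac{q(t)}{2\,n_1(t)},
\end{align*}
and the recurrence for $q$ becomes $q(t+1) = q(t)\bigl(1 - 1/(2\,n_1(t))\bigr)$, since each still-unsaturated agent $a\ge t$ is selected at step $t$ with conditional probability $1/(2\,n_1(t))$.

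Rescaling time by $n$ and writing $y(s) := n_1(\lfloor sn\rfloor)/n$, these recurrences pass formally to the coupled initial-value problem
\begin{align*}
    y'(s) \;=\; -\tfrac12 - q(s),\qquad (\log q)'(s) \;=\; -\tfrac{1}{2\,y(s)},\qquad y(0) = q(0) = 1,
\end{align*}
which can be integrated in closed form. Taking $v := q$ as the independent variable so that $v' = -v/(2y)$, dividing the two ODEs yields $dy/dv = (1 + 2v)\,y/v$; integrating gives $y(v) = v\,e^{2(v-1)}$. Then $ds/dv = -2\,e^{2(v-1)}$, so $s(v) = 1 - e^{2(v-1)}$, and the stopping event $y\downarrow 0$ (equivalently $v\downarrow 0$) occurs at $s^{\ast} = 1 - e^{-2}$. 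Consequently the tail of the input on which class $1$ is unavailable has length $(1 - s^{\ast})\,n = n/e^{2}$, which is exactly the $\tau$ claimed.

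To upgrade this heuristic continuum derivation to the required asymptotic, I would apply Wormald's differential-equation method (or equivalently a direct Azuma--Hoeffding concentration argument on the Doob martingales of $n_1$ and $q$). The preconditions---bounded one-step increments $|\Delta n_1|\le 2$, expected changes matching the ODE up to $O(1/n)$, and Lipschitz continuity of the drift away from the boundary---follow directly from the formulas above. The main obstacle, which I expect to require a separate argument, is that the drift of $\log q$ blows up as $y\downarrow 0$; standard concentration therefore only tracks $n_1(t)$ closely on intervals of the form $[0,(s^{\ast}-\delta)n]$. I would close the final $\delta n$ steps using the uniform negative drift $\mathbb{E}[\Delta n_1]\le -\tfrac12$ to show that once $n_1$ falls below $\omega(\sqrt{n\log n})$ it hits $0$ within $o(n)$ further steps with high probability, so the boundary regime contributes only $o(n)$ to $\mathbb{E}[\tau]$, yielding $\mathbb{E}[\tau] = n/e^{2} + o(n)$.
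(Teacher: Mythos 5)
Your argument is correct and rests on the same engine as the paper's proof: both start from the identical one-step drift $\mathbb{E}[\Delta n_1] = -\tfrac12 - q + \tfrac{q}{2n_1(t)}$ and pass to a differential-equation limit whose solution gives the threshold $1-e^{-2}$. The route to the ODE differs in one genuine respect. The paper short-circuits the survival probability by invoking the combinatorial identity $\Pr{\textsc{OPT}_t \text{ unsaturated}} = n_1(t)/x(t)$ (inherited from the Karp--Vazirani--Vazirani upper-triangular analysis), which collapses everything to the single ODE $dn_1/dx = \tfrac12\bigl(1 + (2n_1-1)/x\bigr)$; you instead track $q$ dynamically via $q(t+1)=q(t)\bigl(1-1/(2n_1(t))\bigr)$ and solve the resulting coupled system, recovering the identity $q = y/(1-s)$ only a posteriori from $y=ve^{2(v-1)}$ and $1-s=e^{2(v-1)}$. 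What your version buys is a self-contained derivation that does not lean on the KVV lemma, together with a more honest concentration step: Wormald's method with an explicit treatment of the boundary where $(\log q)'$ blows up, versus the paper's one-line appeal to Kurtz's theorem. One bookkeeping caveat: the paper defines $\tau$ as the stopping \emph{step}, and both its own proof of the lemma and the downstream use in Theorem~\ref{thm:rand-ind-upper} take $\mathbb{E}[\tau] = n(1-1/e^2) + o(n)$, so the $n/e^2$ appearing in the lemma statement is really the length of the tail after the stopping step (the statement and proof are inconsistent in the paper itself). Your computation produces both quantities, and the one that matters downstream, $s^{\ast}=1-e^{-2}$, is exactly what you obtained.
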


Before proving the lemma, we demonstrate how it is used in conjunction with the optimality of \textsc{Random} for the given instance to complete the theorem's proof.
Observe that
\begin{align*}
        \Ex{V_1(X)} &= \sum_{t} \Pr{o_t \text{ matched to class 1}} \\
        &= \sum_{t} \frac12 \Pr{n_1(t) \neq 0} \\
        &= \frac12 \sum_{t} \Pr{n_1(t) > 0} \\
        &= \frac12 \Ex{\tau}
\end{align*}
where the second equality draws from the fact that, while there is an available matching in class 1, an item has a $\frac12$ probability of being matched to that class. 
Lastly, combining with the result of Lemma~\ref{lem:stop-time} we have 
\begin{align*}
    \Ex{V_1(X)} = \frac12 \left( n - \frac{n}{e^2} - o(n) \right)
\end{align*}
with the remaining items going to class 1. Comparing the two class matching sizes obtains the desired bound of $\frac{1 - 1/e^2}{1 + 1/e^2} \approx 0.762$.

Lastly, by invoking the final key lemma below, we obtain the result.
\begin{lemma} \label{lm:cef-upper-random}
	The CEF performance of any non-wasteful online matching algorithm is upper bounded by the expected size of the matching produced by the \textsc{Random} algorithm on the instance of instance $(I, \pi)$.
\end{lemma}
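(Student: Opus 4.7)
The plan is to combine Yao's minimax principle with a step-by-step coupling to the \textsc{Random} algorithm. I would first use Yao's principle to reduce to the deterministic setting: it suffices to show that any deterministic non-wasteful online algorithm $A$, running on the distribution $(I, \pi)$ with $I$ the graph from the construction and $\pi$ a uniformly random relabeling of the class-1 agents, has expected CEF at most that of \textsc{Random} on the fixed $I$.

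I would then couple $A$ on $(I, \pi)$ with \textsc{Random} on $I$ step by step. The coupling exploits the symmetry of $\pi$: whenever $A$ elects to match an arriving item to some class-1 agent, the identity of that agent is, by the uniform distribution of $\pi$, uniform over the currently unsaturated liking agents in class 1 --- precisely \textsc{Random}'s intra-class sampling rule. When class 1 has no unsaturated liking agent for the current item, both $A$ and \textsc{Random} must send it to class 2 by non-wastefulness. Hence the (random) set of class-1 agents saturated at each step has the same distribution under both processes, and in particular the stopping time $\tau$ of Lemma~\ref{lem:stop-time} is preserved.

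What remains is to compare the class-level decisions, which is the only point at which $A$ can meaningfully differ from \textsc{Random}. Since the class-2 side is in unique bipartite correspondence with the items, the optimistic valuations simplify to $V_i^*(Y_j(X)) = V_j(X)$ for $i \neq j$, so the CEF ratio collapses to $\min\bigl(V_1(X)/V_2(X),\, V_2(X)/V_1(X)\bigr)$. This quantity is concave in $V_1(X)$ and maximized at the balanced allocation. Under the coupling, the budget $\Ex{\tau}$ on the number of items absorbable by class 1 is algorithm-independent, and \textsc{Random} splits this budget symmetrically between the two classes; any deterministic deviation by $A$ pushes $V_1(X)$ away from the balance point, which can only decrease the minimum envy ratio in expectation.

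The main obstacle will be formalizing this last step --- in particular, ruling out that some adaptive class-choice rule of $A$ can exploit the coupled execution to produce a distribution over $V_1(X)$ that is strictly better, in expected CEF, than the one produced by \textsc{Random}'s uniform class-choice. I expect this to follow from an exchange argument on consecutive class-level decisions, combining the concavity of the minimum-ratio objective with the monotonicity, under the coupled execution, of the marginal value of additional class-1 allocations as class 1 saturates.
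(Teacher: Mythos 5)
Your reduction of the intra-class choice to uniform sampling (Yao's principle plus a uniformly random relabeling of the class-1 agents) is in the same spirit as the paper, which simply cites \cite{karp1990optimal} for the fact that within class 1 no algorithm can beat uniform random assignment on the randomly permuted upper-triangular instance. The divergence, and the problem, is in your class-level step. The paper argues that the uniform class choice is optimal via an indistinguishability claim (``any bias towards one class can be exploited by the adversary by flipping the instance''), whereas you argue it via balance and concavity of $\min\bigl(V_1/V_2,\,V_2/V_1\bigr)$. Two of your supporting claims are false on this instance. First, $\Ex{\tau}$ is \emph{not} algorithm-independent: the coupling pins down only the intra-class rule, and the stopping time depends directly on how often the class-level rule routes items to class 1 (routing more aggressively to class 1 saturates it faster and changes the drift of $n_1(t)$). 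Second, \textsc{Random}'s fifty-fifty split does not land at the balance point: it yields $\Ex{V_1(X)} = \tfrac12 n(1-1/e^2) \approx 0.432\,n$ against $\Ex{V_2(X)} \approx 0.568\,n$, precisely because class 1 wastes agents through suboptimal saturation while class 2 never does. Consequently a rule that biases \emph{toward} class 1 moves $V_1$ toward $n/2$, i.e.\ toward balance, so your exchange argument, if carried out, points in the opposite direction from what you need; it would suggest that a biased rule improves the min-ratio rather than degrades it.

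This is not a formalization detail you can defer: the gap is exactly at the step you flag as the ``main obstacle,'' and the concavity-plus-balance route cannot close it because \textsc{Random} is not the balanced optimizer of your collapsed objective on this instance. To salvage a proof you would need the paper's style of argument --- that the adversary's instance (or distribution over instances) makes the two classes symmetric from the algorithm's online viewpoint, so that any class-level bias is punished on the flipped instance --- rather than an argument internal to the single fixed instance $(I,\pi)$ about where the min-ratio is maximized.
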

Thus, the competitive ratio proved above is the best achievable for the given instance and we are done.
\end{proof}

It remains to verify the two crucial lemmas leveraged in the proof above.
\begin{proof}[Proof of Lemma~\ref{lem:stop-time}]
We proceed computing the expected value of $\Delta n_1$ under two different conditions: $\textsc{OPT}_t$ being saturated or not.\footnote{Note that we are also implicitly assuming throughout this argument that $n_1(t) > 0$, i.e., at least one agent is still unsaturated in class 1.}

First, assume $\textsc{OPT}_t$ is saturated at some earlier iteration. 
Then, with probability $\frac12$ the item arriving at time $t$ is matched to class 2 and $\Delta n_1 = 0$, otherwise we must have that $\Delta n_1 = -1$. 
Therefore, we obtain
\begin{align*}
    \Ex{\Delta n_1 | \textsc{OPT}_t \text{ saturated}} = \frac12 (-1) + \frac12 (0) = -\frac12.
\end{align*}

Next, we assume $\textsc{OPT}_t$ is unsaturated. 
Again, with probability $\frac12$ the arriving item is matched to class 2 and we decrease by -1. 
If, instead, the item is matched to class 1 then $\Delta n_1$ can be either -1 or -2 depending on how the item is matched. Since at time $t$ there are $n_1(t)$ potential agents in class 1, the probability that the item is matched optimally is $\frac{1}{n_1(t)}$. 
Thus, we have
\begin{align*}
    \Ex{\Delta n_1 | \textsc{OPT}_t \text{ unsaturated}} = \frac12 (-1) + \frac12 \left( \frac{1}{n_1(t)}(-1) + \left(1 - \frac{1}{n_1(t)}\right) (-2) \right) = \frac12 \left(\frac{1}{n_1(t)} - 3\right).
\end{align*}
It remains to compute the probability that $\textsc{OPT}_t$ is saturated.
Note that by construction of our instance and the random algorithm, the probability that $\textsc{OPT}_t$ is unsaturated by time $t$ is exactly equal to the number of $n_1(t)$ sized subsets of $\{1, \dots , x(t)\}$ which include the optimal agent \cite{karp1990optimal}. 
Thus,
\begin{align*}
    \Pr{\textsc{OPT}_t \text{ unsaturated}} = \frac{n_1(t)}{x(t)}.
\end{align*}
We can now finally compute that
\begin{align*}
    \Ex{\Delta n_1} = -\frac12 \cdot \left( 1 - \frac{n_1(t)}{x(t)}\right) + \frac12 \cdot \frac{n_1(t)}{x(t)} \left( \frac{1}{n_1(t)} - 3 \right).
\end{align*}
Rearranging and simplifying we obtain that
\begin{align*}
        \Ex{\Delta n_1} = -\frac12 \left( 1 + \frac{2n_1(t) - 1}{x(t)} \right) \Rightarrow \frac{\Ex{\Delta n_1(t)}}{\Delta x(t)} = \frac12 \left( 1 + \frac{2n_1(t) - 1}{x(t)} \right)
\end{align*}
and by Kurtz' theorem \cite{kurtz1970solutions}, this is closely approximated by the following differential equation with probability tending to 1 as $n$ tends to infinity:
\begin{align*}
    \frac{dn_1}{dx} = \frac12 \left( 1 + \frac{2n_1 - 1}{x} \right)
\end{align*}
Solving with the initial condition that $n_1 = x = n$ and setting the resultant equation equal to 0, we obtain that the expected stopping time is $\tau = n(1 - \frac{1}{e^2}) - o(n)$.
\end{proof}

\begin{proof}[Proof of Lemma~\ref{lm:cef-upper-random}]
We first note that any item in instance $(I,\pi)$ allocated to class 2 can only be matched to one specific agent, so there is no decision to be made for items in this class. 
Moreover, for items that are allocated to class 1, the best possible matching is achieved in expectation by allocating completely at random to the potential  agents, as proven in \cite{karp1990optimal}. 
We therefore need only prove that \textsc{Random} is optimal at the \emph{class} level. 
We proceed to verify this by comparing with a general algorithm representative of the other possible class matchings.

For any arriving item, if only one class has a potential matching then by non-wastefulness we must give the item to that class. 
Therefore, assume item $o \in M$ arrives and can be matched to either class (i.e., both have a potential matching to a currently unsaturated agent). 
However, from the perspective of the algorithm, there is no way of distinguishing the two classes and any bias towards one can be exploited by the adversary by flipping the given problem instance. 
Thus, the best we can do is randomly pick one of the two classes to match the item and we have the result of the lemma.
\end{proof}

\subsubsection{Divisible Setting}
Assume we have an algorithm that guarantees $\beta$-CEF.
\begin{lemma} \label{equally}
        To ensure a maximal number of agents in Class 1 are saturated, it is optimal to distribute arriving items equally among the agents in this class.
\end{lemma}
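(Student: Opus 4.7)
The plan is to use a symmetry-based minimax argument combined with the concavity of the per-agent saturation function $f(t) = \min\{1,t\}$. The core observation is that at the moment items $2i-1$ and $2i$ arrive, the algorithm knows the currently accessible set $A_i \subseteq N_1$ but has no information distinguishing its agents: the adversary has not yet revealed which agent of $A_i$ will be excluded from $A_{i+1}$, so all agents in $A_i$ are interchangeable from the algorithm's perspective. Any deterministic asymmetric allocation across $A_i$ can therefore be exploited by the adversary, which can always choose to let the less-served agent survive into later accessible sets.

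Concretely, let $X$ be the allocation produced by a deterministic non-wasteful algorithm. Fix the pair of items $\{2i-1, 2i\}$ and two agents $a, a' \in A_i$ that receive different total mass from this pair under $X$. Consider the swapped adversary schedule obtained by exchanging the identities of $a$ and $a'$ in every subsequent exclusion. Since the input graph is invariant under this swap and the history through step $2i-2$ is identical, the algorithm's deterministic responses at steps $2i-1, 2i$ are the same in both schedules, but the eventual saturation counts differ. By symmetry, averaging the saturation outcomes over the two schedules is equivalent to averaging the mass given to $a$ and $a'$; by concavity of $f$ (Jensen's inequality applied agent-by-agent), this averaged allocation achieves at least the mean of the two outcomes, hence at least the worst of them. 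Iterating this pairwise swap over all $(a,a') \in A_i \times A_i$ forces the worst-case-optimal within-class strategy to be constant across $A_i$, and non-wastefulness then forces each of items $2i-1, 2i$ to be split uniformly among the Class~1 agents adjacent to it.

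The main obstacle is ensuring that symmetrization at step $i$ is compatible with the algorithm's decisions at later steps, since the algorithm is online and may behave asymmetrically in the future. I would handle this by a backward induction on $i$: at the terminal step pair $i=n$, the single remaining agent in $A_n$ makes the claim trivial; assuming equal distribution holds for all steps $>i$ by the inductive hypothesis, the swap argument at step $i$ introduces no downstream asymmetry, since the future is already symmetric. Propagating this from $i = n$ down to $i = 1$ establishes that, for any non-wasteful algorithm facing the adversarial exclusion schedule, the worst-case-optimal strategy is to equally distribute each arriving item among the Class~1 agents that like it, which is exactly the statement of the lemma.
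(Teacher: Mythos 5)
Your overall architecture matches the paper's: the paper's proof of Lemma~\ref{equally} is a three-sentence symmetry argument observing that, when items $2i-1$ and $2i$ arrive, the algorithm cannot tell which agent of the accessible set the adversary will drop from all subsequent items, so no agent can be justifiably favored over another. Your symmetrization-plus-Jensen formalization with backward induction over the item pairs is a reasonable (and more rigorous) way to turn that intuition into an argument, and the swap step itself does not depend on knowing which agent the adversary prefers to exclude.

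There is, however, a concrete error in your account of how the adversary exploits asymmetry: you write that the adversary ``can always choose to let the less-served agent survive into later accessible sets.'' The exploitation goes the other way. Keeping the under-served agent accessible only helps the algorithm, since future items can top it up; the adversary instead \emph{excludes} the least-served agent, locking in its deficit permanently and removing the largest block of residual capacity from every future accessible set. A two-agent check makes this concrete: with per-item mass $0.4$ routed to Class 1, sending items $1,2$ entirely to $a_1$ yields total Class-1 value $1.6$ if the adversary then drops $a_1$ but only $1.0$ if it drops $a_2$, versus $1.4$ for the equal split --- the worst case is realized by dropping the \emph{under}-served agent. This is not merely a slip of prose: the harmonic-sum bound in the main theorem is driven precisely by the fact that any deviation from the equal split lets the adversary discard under-served capacity. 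Separately, your claim that ``averaging the saturation outcomes over the two schedules is equivalent to averaging the mass given to $a$ and $a'$'' is asserted rather than proved; the two swapped runs differ in their entire downstream evolution (which agents saturate when, and hence where later items can flow), not just in the masses from this one pair, so you need concavity of the worst-case deliverable mass as a function of the current saturation vector. Your backward induction is the right device for closing that gap, and in fairness the paper's own proof supplies no more detail on this point either.
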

\begin{proof}[Proof of Lemma~\ref{equally}]
    For some $i$, consider the associated items $2i-1$ and $2i$. 
    We argue that it is optimal to distribute these two items equally within Class 1.

    By the adversarial construction of the adjacency in Class 1, we know that one of the agents who likes these items does not like any of the following items. 
    In the offline setting, it is straightforward to match the items to their associated agent and ensure a perfect matching.
    However, in the online setting, we do not know which of the current agents will be unavailable for future matching in the adversarial input stream. 
    Therefore, to in maximizing the USW, it is optimal to distribute this item equally within the first class. 
\end{proof}
	
Now, let $\alpha = \frac{1 - \beta}{1+\beta}$ and reciprocally define $\beta = \frac{1-\alpha}{1+\alpha}$. 
We proceed to demonstrate that any $\beta$-CEF algorithm must allocate arriving items in a strategic manner between the two classes of the given adversarial input to maintain this approximate guarantee.
	
\begin{lemma} \label{alpha-beta}
    Upon arrival of item $o_t$, if there exists $a \in N_1$ such that $\sum_{k < t} x_{a,o_k} < 1$ then any $\beta$-CEF algorithm must divide $o_t$ among the two classes such that the first class receives $\frac{1+\alpha}{2}$ and the second receives $\frac{1-\alpha}{2}$.
\end{lemma}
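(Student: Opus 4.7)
The plan is to prove the lemma by strong induction on $t$. Under the induction hypothesis, every prior step $s < t$ meeting the lemma's condition has the algorithm splitting $o_s$ with ratio $\tfrac{1+\alpha}{2}$ to Class~1 and $\tfrac{1-\alpha}{2}$ to Class~2, so that $V_1(X_{t-1}) = (t-1)\tfrac{1+\alpha}{2}$ and $V_2(X_{t-1}) = (t-1)\tfrac{1-\alpha}{2}$, and by Lemma~\ref{equally} the Class~1 mass is distributed equally among the still-participating agents. Using the identity $\beta(1+\alpha) = 1-\alpha$, the Class~2 CEF invariant $V_2(X_{t-1}) = \beta V_1(X_{t-1})$ is exactly tight at step $t-1$.

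Let $r'$ denote the fraction of $o_t$ the algorithm assigns to Class~1. For the upper bound $r' \leq \tfrac{1+\alpha}{2}$, I would invoke the Class~2 CEF requirement at step $t$. Because Class~2 is complete bipartite and $V_1(X_t) \leq n$, we have $V_2^*(Y_1(X_t)) = V_1(X_t)$. Substituting the expressions for $V_1(X_t)$ and $V_2(X_t)$ into $V_2(X_t) \geq \beta V_2^*(Y_1(X_t))$ and using the tightness at $t-1$, the $(t-1)$-terms cancel and the inequality reduces to $1-r' \geq \beta r'$, yielding $r' \leq \tfrac{1}{1+\beta} = \tfrac{1+\alpha}{2}$. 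Any strict excess creates an immediate deficit in the Class~2 CEF invariant that cannot be recovered in the remainder of the stream, since each future step contributes at most one unit of slack while $V_2^*(Y_1)$ grows together with $V_1$; the adversary is then free to terminate (or exploit its remaining choices of excluded agent) in a way that leaves the deficit unresolved, contradicting the $\beta$-CEF guarantee.

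For the lower bound $r' \geq \tfrac{1+\alpha}{2}$, I would argue by contradiction using the Class~1 CEF at the end of the stream. Applying the upper-bound argument inductively to every future step caps each later allocation at $\tfrac{1+\alpha}{2}$, so the shortfall $\tfrac{1+\alpha}{2} - r'$ cannot be compensated downstream and $V_1(X_{2n})$ is strictly smaller than under the prescribed strategy. Meanwhile, increasing the fraction of $o_t$ placed in $Y_2$ can only enlarge the optimistic value $V_1^*(Y_2(X_{2n}))$ by monotonicity of maximum matching in the supply. A direct computation using Lemma~\ref{equally} and the nested Class~1 neighborhoods (the same calculation that will subsequently yield the $\beta \leq 0.677$ bound) shows that, under the prescribed strategy, the end-of-stream inequality $V_1(X_{2n}) \geq \beta V_1^*(Y_2(X_{2n}))$ is exactly tight, so any shortfall in $V_1$ together with the (weak) increase in $V_1^*$ strictly violates it, contradicting $\beta$-CEF.

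The primary obstacle will be the lower-bound direction, which requires a global end-of-stream argument rather than a local one. The crucial ingredients are (i) the upper bound applied inductively to every future step, which makes a local deficit at step $t$ permanent, and (ii) the fact that Class~1's CEF at the end is already tight under the prescribed strategy, leaving no slack to absorb the deficit. The most delicate technical step is quantifying $V_1^*(Y_2(X_{2n}))$ under the nested adjacency structure together with the equal-distribution saturation dynamics; this is precisely the calculation that feeds into the outer theorem's $0.677$ upper bound.
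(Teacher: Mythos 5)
Your upper-bound direction ($r' \le \tfrac{1+\alpha}{2}$) is essentially the paper's argument made more explicit: both proofs rest on $V_2^*(Y_1(X)) = V_1(X)$ (Class 2 is complete) and on the tightness of the Class-2 invariant under the induction hypothesis, and your algebraic reduction to $1-r' \ge \beta r'$ is cleaner than the paper's informal ``adversary can flip the instance'' phrasing. One caution: your claim that a strict excess ``cannot be recovered in the remainder of the stream'' is false on its own --- the algorithm can repay a Class-2 deficit by under-allocating to Class 1 at later steps, since the end-of-stream constraint only imposes the cumulative cap $V_1(X_s) \le s\cdot\tfrac{1+\alpha}{2}$. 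What actually forces the per-step bound is the mechanism you mention almost in passing: a deterministic online algorithm behaves identically on the truncated instance consisting of the first $t$ items, so the adversary terminates at $t$ and the deficit is locked in. That should be the argument, not a fallback.

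The lower-bound direction has a genuine gap. You cap every future allocation at $\tfrac{1+\alpha}{2}$ by ``applying the upper-bound argument inductively,'' but that argument's induction hypothesis is that all prior steps followed the prescribed split exactly, so that the Class-2 invariant is tight; once the algorithm under-allocates at step $t$, Class 2 carries a surplus and the same derivation only yields the cumulative cap, which leaves room to over-allocate to Class 1 later and fully repay the shortfall in total mass. The reason the shortfall is nonetheless irrecoverable is not a per-step cap but the nested adjacency structure: $o_t$'s Class-1 neighborhood is a superset of every later item's neighborhood, so deferring mass to later items spends the saturation budget $\sum_l 2r_l/(n-l+1) \le 1$ at a worse rate and strictly reduces the total mass Class 1 can absorb before its active agents saturate. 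This exchange argument is exactly what the paper's proof uses (``$S_1$ is a superset of $S_2$ \ldots the inequality persists''), and it is the ingredient your proposal is missing. Separately, the end-of-stream tightness of $V_1(X) \ge \beta V_1^*(Y_2(X))$ that you invoke holds only at the critical $\beta \approx 0.677$; for smaller $\beta$ there is genuine slack, so the exact-split claim cannot be derived this way (indeed only the upper-bound direction is load-bearing in the proof of the outer theorem).
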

Intuitively, we demonstrate that the above ratio is optimal when working against an adversary who can easily flip the input instance of Figure~\ref{fig:0677_cef_graph} for the two classes if the algorithm biases its decision making too heavily. 
We then show that the $\beta$ ratio is the best possible strategy against the possible adversarial input instances.
\begin{proof}[Proof of Lemma~\ref{alpha-beta}]
        Our algorithm must ensure that neither class is envious of the other (more than the aforementioned threshold, $\beta$).
        Otherwise, the adversary has an instance that one class is envious of the other one and contradicts the threshold.
        Here, we show that if it at any iteration the $\beta$-CEF guarantee holds, we can strategically divide the next item to maintain the necessary inequality.
    
        For the given construction, we can always match items to Class 2 which implies that $V_2^*(Y(X_1)) = V_1(X_1)$.
        Therefore, in achieving a $\beta$-CEF we must match as much of each item to the first class as possible to achieve equitable treatment. 
        %To give more insight into why we use this distribution, we can view it as we distribute items in a way that we have $\beta \cdot V_1(X_1) = V_2(X_2)$. 
        In other words, we give the maximum possible proportion of each item to the first class in a way that it does not contradict with $\beta$-CEF. % \ms{shouldn't we require that $\beta \cdot V_1(X_1) = V_1^*(X_2)$? If the two definitions are equivalent for the given problem instance, we should mention this.}

        % \paragraph{Base case}
        We will proceed by induction argument on the input stream.
        For the base case, assume towards contradiction that the first arriving item is not allocated according to the specified distribution, i.e. Class 1 is not fully saturated and the item is not distributed with ratio $1+\alpha$ to $1-\alpha$ to the first and second classes respectively.
        Observe that we cannot give more than a $\frac{1+\alpha}{2}$ fraction of the arriving item to Class 1 without violating the $\beta$-CEF guarantee. %it is not possible to give class 1 more than $\frac{1+\alpha}{2}$. Otherwise, $\beta.V_1(X_1) > V_2(X_2)$ which violates our $\beta$-CEF guarantee.

        For each of the subsequent items, we must match up to a $\frac{1+\alpha}{2}$ fraction of the item or less if all agents become saturated in Class 1.
        In each instance, the remaining portion of the item is given to the second class.
        We proceed to prove that if the allocation was $\beta$-CEF after each item's division, then this property must persist.
        To this end, assume that the argument holds for $t-1$ for some $t \ge 2$.

        % \paragraph{Class $2$ does not envy $1$ that much}
        First, we examine the second classes' perspective.
        We claim that matching the arriving item at iteration $t > 1$ according to the prescribed ratio ensures $V_2(X_2^t) \ge \beta \cdot V_1(X_1^t)$.
        At round $t$, we must have that this inequality was true prior to iteration $t$ by the induction assumption and by matching item $o_t$ according to the distribution between the two classes.
        Assume towards contradiction that we do not match according to the defined ratios at iteration $t$ -- by matching the arriving item with a portion higher than $\frac{1+\alpha}{2}$, an adversary can instead flip the input instance and force the algorithm to allocate a smaller portion of the item than is needed for the $\beta$-CEF guarantee.
        Thus, after allocating item $o_t$ we cannot have given more than $\frac{1+\alpha}{2}$ to the first class and we still have $V_2(X_2) \ge \beta \cdot V_1(X_1) = \beta \cdot V_2^*(Y(X_1))$,
        where the last equation follows from the fact that we can always match every item to Class 2 that was matched to Class 1. %\ms{shouldnt this have $V_2^*()$ and not $V_1$?} \shayan{I think class 1 and class 2 should be swapped in the sentence. We want to say that any item that was matched to class 1, can also be matched to class 2, so $V^*_2(X_1) \ge V_1(X_1) = |X_1|$.}
        %\suho{please fix it as needed} \shayan{done} \ms{makes sense. please expand on the writing I added for class 2 and class 1's envy to flesh out this argument}
        
        We lastly claim that $V_1(X_1) \ge \beta \cdot V^*_1(Y(X_2))$. From the perspective of the first class, we must have that the $\beta$-CEF inequality holds after allocating the first item according to the defined ratio.
        Following the prior item's arrival, we demonstrated that the inequality held.
        From the construction, we further have that each item is connected to more agents compared to items arriving later.
        To be more precise, if item $o_1$ arrives before item $o_2$, $o_1$ is connected to a set of agents, let's say $S_1$, and $o_2$ is connected to a set of agents, let's say $S_2$. Then $S_1$ is a superset of $S_2$.
        As a result, when we divide the arriving item, we have possibly decreased the portion of the next items (which were connected to fewer agents in class 1) and instead increased the portion of the current item (which is connected to a superset of the agents the next items are connected to). Therefore, if previously we had $V_1(X_1) \ge \beta \cdot V_2(X_2)$, then the inequality persists. % Because this new proportion that is allocated to Class 1, can be matched with the exact same agents.

    \end{proof}

Combining the results of Lemma~\ref{equally} and Lemma~\ref{alpha-beta}, we have conditions under which we maintain a $\beta$-CEF approximation. 
It remains to analyze the maximum such $\beta$ value which satisfies these conditions.

\begin{proof}[Proof of Theorem~\ref{thm:divisible}]
    We begin by bounding the size of the matching to each class.

    First, we claim that Class 2 will be saturated and its valuation will be $n$. Since we have $2n$ items and our algorithm is non-wasteful, we must match items in each step until the classes are satisfied. Further note that Class 2 will \emph{always} have an available matching by construction. Now since Class 1 can only match at most $n$ items, we must have that $V_2(X_2) = n$. %\ms{$\ge$ or $=$?} \shayan{=, because in its previous sentence, we mentioned that, "because it is at most n"}.
    
    We next bound the final step $i$ after which no arriving items will be matched to Class 1 as a result of our equal distribution within the class (as proven in Lemma~\ref{equally}). By synthesis of the two above lemmas, we have that Class 1 receives a $\frac{1 + \alpha}{2} \cdot \frac1n$ portion of the first two items. % From the first two items, each agent of class 1 receives $\frac{1}{n}.(1+\alpha)/2$. 
    Furthermore, the $i$-th pair of items will contribute $\frac{1+\alpha}{2} \cdot \frac{1}{n-i+1}$ to the size of Class 1's matching.%From the items 3 and 4, each remaining agent of class 1 receives $\frac{1}{n-1}.(1+\alpha)$, and so forth. 
    Therefore, it is sufficient to find the first value of $i$ such that 
    \begin{align*}
        \frac{1+\alpha}{2} \sum_{0 \le k < i} \frac{1}{n - k} = \frac{1+\alpha}{2} \left( H_n - H_{n-i} \right)\ge \frac12 \, ,
    \end{align*}
    where $H_n$ is the $n$-th Harmonic number.
    % $(1+\alpha) \cdot H_n - H_{n-i} \ge 1$. \ms{explain this intuition more} 
    % ($H_n - H_{n-i} = \frac{1}{n} + \frac{1}{n - 1} + ... \frac{1}{n - i+1}$)
    By the Euler–Maclaurin formula \cite{apostol1999elementary}, we have
    $$H_n =\ln{n} + \gamma + \frac{1}{2n} - \epsilon_n, \quad  0 \leq \epsilon_n \leq \frac{1}{8n^2}$$
    where $\gamma$ is the Euler-Mascheroni constant. We now define $\epsilon = \frac{1}{2n} - \epsilon_n - \left(\frac{1}{2(n-i)} - \epsilon_{n-i}\right)$ and rewrite the desired expression as
    \begin{align*}
        1 &= (1+\alpha)(H_n - H_{n-i}) 
        \\&= (1+\alpha) (\ln{n}-\ln{(n-i)} + \epsilon) \, .
    \end{align*}
    This further implies
    \begin{align}
        &\ln n - \ln (n-i) = \frac{1}{1+\alpha} - \epsilon \, .
        % \notag \\
        % &\Rightarrow \frac{n}{n-i} = e^{\frac{1}{1+\alpha}-\epsilon} \notag \\
        % &\Rightarrow \frac{i}{n} = 1 - e^{-\frac{1}{1+\alpha}+\epsilon}. \label{eq:eps-eq}
    \end{align}
    which is equivalent to the following
    \begin{align}
        \frac{i}{n} = 1 - e^{-\frac{1}{1+\alpha}+\epsilon} \, .\label{eq:eps-eq}
    \end{align}
    
    % which further yields that $\frac{i}{n} = 1 - e^{-\frac{i}{1+\alpha}+\epsilon}$. 
    
    Since $i$ is the last item that is partially matched to Class 1, Lemma~\ref{alpha-beta} guarantees that this class receives a $1+\alpha$ fraction of the item. Therefore, we must have that the matching to Class 1 is
    
    $$\beta \le \frac{i(1+\alpha)}{n}\, ,$$
    where the factor of $n$ comes from the bound on Class 2's matching and the inequality from the algorithm's approximation guarantee. Expanding on this ineqaulity using Equation~(\ref{eq:eps-eq}) implies
    \begin{align*}
        \beta &\le \frac{i(1+\alpha)}{n} \\
        &= (1+\alpha)\left( 1 - e^{-\frac{1}{1+\alpha}+\epsilon} \right) \\
        &= \frac{2}{1 + \beta}\left(1 - e^{-\frac{(1 + \beta)}{2}+\epsilon}\right) \\
        &= \left(1 - e^{\frac{-(1 + \beta)} {2}}\cdot e^\epsilon \right) \frac{2}{1+\beta}\, ,
    \end{align*}
    It remains show the limiting behavior of this bound on our approximation ratio as $n$ increases.
    \begin{claim} \label{lem_epsilon}
        $\lim_{n\to\infty} \epsilon =  0$
    \end{claim}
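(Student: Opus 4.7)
The plan is to bound each of the four terms appearing in the definition
\[
\epsilon \;=\; \frac{1}{2n} - \epsilon_n - \left(\frac{1}{2(n-i)} - \epsilon_{n-i}\right)
\]
separately. First I would apply the triangle inequality together with the already-stated bounds $0 \le \epsilon_m \le \frac{1}{8m^2}$ to obtain
\[
|\epsilon| \;\le\; \frac{1}{2n} + \frac{1}{8n^2} + \frac{1}{2(n-i)} + \frac{1}{8(n-i)^2}.
\]
The first two terms obviously tend to $0$ as $n \to \infty$, so the entire task reduces to showing that $n - i \to \infty$ as $n \to \infty$.

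To establish linear growth of $n - i$, I would exploit the fact that $i$ is, by construction, the smallest positive integer satisfying $\frac{1+\alpha}{2}(H_n - H_{n-i}) \ge \frac{1}{2}$. This immediately yields the two-sided sandwich
\[
\frac{1}{1+\alpha} \;\le\; H_n - H_{n-i} \;<\; \frac{1}{1+\alpha} + \frac{1}{n-i+1},
\]
where the upper bound follows from minimality of $i$: removing one step (replacing $i$ by $i-1$) fails the defining inequality, so the gap is at most the single Harmonic increment $\frac{1}{n-i+1}$.

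Next I would substitute the Euler-Maclaurin expansion $H_m = \ln m + \gamma + \Theta(1/m)$ into this sandwich. The constants $\gamma$ cancel, giving $\ln\!\bigl(\tfrac{n}{n-i}\bigr) = \tfrac{1}{1+\alpha} + O\!\bigl(\tfrac{1}{n-i}\bigr)$, and hence $(n-i)/n \to e^{-1/(1+\alpha)}$. Since $\alpha \in [0,1]$ keeps $e^{-1/(1+\alpha)}$ strictly positive (bounded below by $e^{-1}$), it follows that $n - i$ grows linearly in $n$, so both $\frac{1}{2(n-i)}$ and $\frac{1}{8(n-i)^2}$ vanish in the limit. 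Combined with the trivial decay of the first two terms, this yields $|\epsilon| \to 0$.

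The main subtlety to be careful about is the apparent circularity: $\epsilon$ is defined through $i$, and $i$ is defined through an inequality whose target value could be perturbed by $\epsilon$. The sandwich argument above cleanly sidesteps this because it invokes only the minimality of $i$ and the Harmonic asymptotics, never the definition of $\epsilon$ itself. Once the linear growth of $n - i$ is in hand, the limit claim is immediate from the triangle-inequality bound.
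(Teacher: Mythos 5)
Your proof is correct, but it takes a genuinely different route from the paper's. The paper first establishes a crude constant bound $|\epsilon| \le 2$ from the triangle inequality, then substitutes the derived identity $\tfrac{1}{n-i} = \tfrac{1}{n}\,e^{\frac{1}{1+\alpha}-\epsilon}$ (Equation~(2)) back into the bound, so that the bounded exponential factor forces each term to vanish as $n \to \infty$; this is a bootstrap that deliberately tolerates the self-reference you flag as a potential circularity. (Incidentally, the paper's displayed bound uses $e^{\frac{1}{1+\alpha}-2}$ where the correct upper bound from $|\epsilon|\le 2$ is $e^{\frac{1}{1+\alpha}+2}$; the conclusion is unaffected since the factor is still a constant.) You instead avoid $\epsilon$ entirely: the minimality of $i$ gives the sandwich $\frac{1}{1+\alpha} \le H_n - H_{n-i} < \frac{1}{1+\alpha} + \frac{1}{n-i+1}$, from which $n-i \to \infty$ follows at once (a bounded $n-i$ would force $H_n - H_{n-i} \to \infty$, contradicting the upper bound), and indeed $(n-i)/n \to e^{-1/(1+\alpha)} \ge e^{-1}$. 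Note that $n - i \to \infty$ alone already suffices for your triangle-inequality bound, so the linear-growth refinement is a bonus rather than a necessity. Your version is arguably cleaner and slightly more careful, since it works with $i$ as the minimal integer satisfying the defining inequality rather than treating that inequality as an exact equation; what the paper's bootstrap buys in exchange is that it reuses Equation~(2), which is needed anyway for the main computation, rather than introducing a separate asymptotic argument.
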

    \begin{proof}
         By definition of $\epsilon$ and the bound of $\epsilon_n < \frac{1}{8n^2}$, we can compute
         \begin{align*}
             | \epsilon | &= \left|\frac{1}{2n} + \frac{1}{2(n-i)} + (\epsilon_{n-i} - \epsilon_n)\right| \\
             &\le \frac{1}{2n} + \frac{1}{2(n-i)} + \frac{1}{8(n-i)^2} \\
             &\le \frac12 + \frac12 + \frac18 \\
             &\le 2.
         \end{align*}
         Now by invoking Equation~(\ref{eq:eps-eq}) we can expand the definition of $\epsilon$ as
         \begin{align*}
             | \epsilon | \le \frac{1}{2n} + \frac{1}{2n} \cdot e^{\frac{1}{1+\alpha}-\epsilon} + \frac{1}{8}\left( \frac1n \cdot e^{\frac{1}{1+\alpha}-\epsilon} \right)^2
         \end{align*}
         We lastly compute the limit:
         \begin{align*}
             \lim_{n \to \infty} | \epsilon | &\le \lim_{n \to \infty} \left( \frac{1}{2n} + \frac{1}{2n} \cdot e^{\frac{1}{1+\alpha}-\epsilon} + \frac{1}{8}\left( \frac1n e^{\frac{1}{1+\alpha}-\epsilon} \right)^2 \right) \\
             & \le \lim_{n \to \infty} \left( \frac{1}{2n} + \frac{1}{2n} \cdot e^{\frac{1}{1+\alpha}-2} + \frac{1}{8}\left( \frac1n e^{\frac{1}{1+\alpha}-2} \right)^2 \right) \\
             &= 0
         \end{align*}
         completing the claim.
    \end{proof}
   
   Using the result of Claim~\ref{lem_epsilon}, we have that  $\lim_{n\to\infty} e^\epsilon = 1$ and thus by taking $n\to\infty$ we have $$\beta \leq \left(1 - e^{\frac{-(1 + \beta)} {2}} \right) \frac{2}{1+\beta}$$
   By direct computation, we obtain that $\beta \le 0.677$. Therefore, we cannot achieve $\beta$-CEF for $\beta > 0.677$.
\end{proof}

\subsection{Proofs from Section 5}
\begin{proof}[Proof of Theorem~\ref{thm:price_fair}]
    Suppose that an algorithm, $\mathcal A$, is $\alpha$-CEF for some $\alpha \in (0,1)$. Let $p$ and $q$ be coprime integers such that $|\alpha - p/q| < \epsilon$ for some small $\epsilon > 0$. Assume we have $k-1$ classes with $q$ agents, as well as a $k$-th class comprised of $q(k-1)$ agents.
    An adversary constructs an input stream wherein the items arrive in two phases.
    In the first phase, $p(k-1) + q$ items arrive, each of which has edges to \emph{every} agent in the graph.
    The second phase consists of $k-1$ groups arriving sequentially, where the $i$-th such group is comprised of $q$ items with edges to all the agents in class $i$. Let $c_i(t)$ be a random variable that indicates the number of items allocated to class $i$ at the end of round $t$ and let $\tau = p(k-1) + q$. We first prove the following claim for the given instance.

    \begin{claim}\label{cl:pof}
        For any non-wasteful $\alpha$-CEF algorithm and $\tau = p(k-1) + 1$, we must have that $\Ex{\sum_{i=1}^{k-1}c_i(\tau)} \ge p(k-1).$
    \end{claim}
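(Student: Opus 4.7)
The approach is to consider the adversarial scenario in which the stream is truncated at the end of phase~1, i.e., after $\tau=p(k-1)+q$ items. Since the algorithm is online and oblivious to the future, its $\alpha$-CEF guarantee must hold at this stopping point as well. Because each phase~1 item has an edge to every one of the $2q(k-1)$ agents and there are only $p(k-1)+q<2q(k-1)$ items for sufficiently large $k$, non-wastefulness forces every phase~1 item to be matched, giving the deterministic identity
\begin{align*}
    \sum_{i=1}^k c_i(\tau) \;=\; p(k-1)+q.
\end{align*}

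The core step is to apply the CEF constraint for each class $i\in[k-1]$ envying class $k$ at this stopping point. Since $Y_k(\tau)$ consists entirely of universally connected phase~1 items and class $i$ has only $q$ agents, the optimistic value collapses to $V_i^*(Y_k(\tau))=\min(c_k(\tau),q)$. Taking expectations yields $\Ex{c_i(\tau)}\ge\alpha\,\Ex{\min(c_k(\tau),q)}$ for each $i<k$; summing over $i<k$ gives
\begin{align*}
    \Ex{\textstyle\sum_{i<k} c_i(\tau)} \;\ge\; \alpha(k-1)\,\Ex{\min(c_k(\tau),q)}.
\end{align*}
I would then case-split on $\Ex{c_k(\tau)}$. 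If $\Ex{c_k(\tau)}\le q$, the non-wastefulness identity alone yields $\Ex{\sum_{i<k}c_i(\tau)} = p(k-1)+q-\Ex{c_k(\tau)} \ge p(k-1)$. If $\Ex{c_k(\tau)}>q$, the bound $\min(c_k,q)=q$ (in the deterministic instance, or up to an $\epsilon$-slack absorbed by the assumption $|\alpha-p/q|<\epsilon$) combined with the CEF inequality yields $\Ex{\sum_{i<k}c_i(\tau)}\ge \alpha(k-1)q\ge p(k-1)$.

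The main technical hurdle will be the randomized refinement of the second case: a priori $\Ex{\min(c_k,q)}$ can be strictly smaller than $\min(\Ex{c_k},q)$ when $c_k$ has large variance, so a naive application of the CEF inequality does not immediately deliver $p(k-1)$. To close this gap I would (i) symmetrize the algorithm by permuting the labels of the first $k-1$ classes, which is WLOG since the instance is invariant under such permutations, and (ii) invoke the additional symmetric CEF constraints $\Ex{c_i}\ge\alpha\,\Ex{\min(c_j,q)}$ for all pairs $i,j\in[k-1]$, which couple the fluctuations of the $c_i$'s and effectively rule out the high-variance configurations of $c_k$ that would otherwise break the bound. This reduces the randomized analysis to the deterministic argument already outlined, completing the proof of the claim up to the $O(\epsilon)$ error inherent in the rational approximation $p/q\approx\alpha$.
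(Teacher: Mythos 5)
Your route is, at its core, the same as the paper's: combine non-wastefulness (every phase-one item is universally liked and there is spare capacity, so $\sum_{i=1}^{k}c_i(\tau)=p(k-1)+q$ deterministically) with the CEF constraints between each class $i<k$ and class $k$ evaluated at the end of phase one. Two things you do are genuinely more careful than the paper: you justify applying CEF mid-stream via truncation of the input (necessary, since at the end of phase two every class $i<k$ is fully saturated and the final CEF constraints become vacuous), and you correctly write $V_i^*(Y_k(\tau))=\min(c_k(\tau),q)$ rather than $c_k(\tau)$. Your first case, $\mathbb{E}[c_k(\tau)]\le q$, is clean and needs no fairness assumption at all. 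But your second case is not a residual technicality: by the non-wastefulness identity, $\mathbb{E}[\sum_{i<k}c_i(\tau)]<p(k-1)$ holds \emph{iff} $\mathbb{E}[c_k(\tau)]>q$, so case two is exactly the negation of the claim and is precisely the contradiction hypothesis the paper works from. The paper "closes" it by writing $V_i^*(Y_k(X))=\mathbb{E}[c_k(\tau)]\ge q$ and treating $\mathbb{E}[c_i(\tau)]<p<q$ as a contradiction, which both drops the cap at $q$ inside the expectation and compares $V_i$ to $V_i^*$ instead of to $\alpha V_i^*$. So the obstruction you flag --- $\mathbb{E}[\min(c_k,q)]$ versus $\min(\mathbb{E}[c_k],q)$ --- is real, and it is present in the paper's own argument, not only in yours.

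Your proposed repair, however, does not close the gap. Consider an algorithm that flips a single coin up front and, with probability $\tfrac12$, routes every phase-one item to class $k$, while with probability $\tfrac12$ it spreads them evenly over classes $1,\dots,k-1$ (both branches fit within the agent capacities for large $k$, so the matching is non-wasteful). This algorithm is already symmetric in the first $k-1$ classes, so your symmetrization is vacuous; every pairwise constraint $\mathbb{E}[c_i(\tau)]\ge\alpha\,\mathbb{E}[\min(c_j(\tau),q)]$ with $i,j<k$ holds with ratio $1$; and the constraint against class $k$ holds because $\mathbb{E}[\min(c_k(\tau),q)]=q/2$ while $\mathbb{E}[c_i(\tau)]=(p+q/(k-1))/2\ge \alpha q/2$ up to the rational-approximation slack. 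Yet $\mathbb{E}[\sum_{i<k}c_i(\tau)]=(p(k-1)+q)/2$, which is strictly less than $p(k-1)$ once $k>q/p+1$. Hence the in-expectation CEF constraints arising from this instance and its truncations simply do not imply the claim; any correct proof must bring in something more (for instance, $\alpha$-CEF on auxiliary instances that penalize such coin-flip strategies, or a pointwise rather than in-expectation fairness requirement). As written, neither your case two nor the paper's contradiction step goes through.
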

    \begin{proof}
        Assume towards contradiction that $\Ex{\sum_{i=1}^{k-1}c_i(\tau)} < p(k-1)$.
        Since the algorithm is assumed to be non-wasteful, this implies that the $k$-th class must have $\Ex{c_k(\tau)} \ge q$ to account for all the items arrived thus far. By the pigeonhole principle, our assumption further implies that there exists some $i \in [k-1]$ such $\Ex{c_i(\tau)} < p$. Thus, we must have that 
        \begin{align*}
            V_i(X) = \Ex{c_i(\tau)} < p &< q \le \Ex{c_k(\tau)} = V_i^*(Y_k(X))
        \end{align*}
        contradicting our assumption on the $\alpha$-CEF guarantee.
    \end{proof}
    \eat{
    Then, obviously we have $\Ex{C_k(\tau)} \ge q$.
    Note further that there exists $i \in [k-1]$ such that $\Ex{C_i(\tau} < p$.
    Then such class $i$ and the class $k$ violates the $\alpha$-CEF guarantees, since class $i$ would have received $q$ items if it were to be allocated the $q$ items allocated to class $k$ at round $\tau$.
    }

    We now proceed to analyze the class matching size in the second phase of item arrivals. Of the $q$ arriving items specific to some class $i$, exactly $c_i(\tau)$ must remain unmatched since all feasible agents will already be saturated.
    % In the second phase, if the algorithm has allocated $C_i(\tau)$ items to class $i$ in the first phase, then exactly $C_i(\tau)$ items among $q$ items that can be matched to class $i$ will be remaining dangling.
    This implies that the utilitarian social welfare at the end of the second phase is
    \begin{align*}
        \usw(X)
        &=
        \left(\sum_{i=1}^{k-1}c_i(\tau) 
        +
        c_k(\tau)\right)
        + 
        \sum_{i=1}^{k-1} (q - c_i(\tau)) 
        = 
        \sum_{i=1}^k c_i(\tau) + \sum_{i=1}^{k-1} q - \sum_{i=1}^{k-1}c_i(\tau)
        % C_k(\tau) + \sum_{i=1}^{k-1} q 
    \end{align*}
    Now, using the fact that $p(k-1)+q$ items arrive in the first phrase where all can be matched to any agents, we further have that
    \begin{align*}
        \usw(X) &= 
        p(k-1) + q + \sum_{i=1}^{k-1} q - \sum_{i=1}^{k-1}c_i(\tau)
        =
        p(k-1) + qk - \sum_{i=1}^{k-1}c_i(\tau)
    \end{align*}
    
    where the remaining equalities are mere algebra manipulation on the summations.
    Now, as a result of Claim~\ref{cl:pof} we have that in expectation:
    
    $$\Ex{\usw(X)} = \Ex{p(k-1) + qk - \sum_{i=1}^{k-1}c_i(\tau)} \le qk.$$
    % \ms{why must have that $\Ex{C_k(\tau)} \le q$? May need to state this explicitly (it briefly pops up in the proof of claim 5.2 due to the contradictory assumption)}
    % \suho{I've changed it a bit, plz chec out}
    \eat{
    In expectation, we observe that
    \begin{align*}
        \Ex{|M|}
        &=
        \Ex{\sum_{i=1}^{k-1}C_i(\tau) 
        +
        C_k(\tau)
        + 
        \sum_{i=1}^{k-1} (q - C_i(\tau))}.
        \\
        &=
        \Ex{\sum_{i=1}^{k}C_i(\tau)}
        +
        q(k-1) - \Ex{\sum_{i=1}^{k-1} C_i(\tau)}
        \\
        &\le
        q + p(k-1) + q(k-1) - p(k-1) 
        \\
        &= qk,
    \end{align*}
    where the inequality follows from $\Ex{\sum_{i=1}^{k-1} C_i(\tau)} \ge p(k-1)$.
    }
    Lastly, observe that the optimal offline solution for this adversarial input instance would instead allocate all items in the first phase to class $k$, and the remaining $q$ items specific to each class to their corresponding class, giving a USW value of $p(k-1) + q + q(k-1)$.
    Thus, the competitive ratio for the USW objective is given by
    \begin{align*}
        \frac{qk}{qk+p(k-1)}
        =
        \frac{1}{1 + \alpha(k-1)/k} %\xrightarrow[k \to \infty]{} \frac{1}{1+\alpha}.
    \end{align*}
    which tends towards a lower bound of $\frac{1}{1+\alpha}$ as $k$ increases. Thus, we have the result of the theorem.
\end{proof}
% Paper body

% \bibliographystyle{plainnat}
% \bibliography{ref}
% \newpage

\appendix
\end{document}